\let\csname equation*\endcsname\relax
\let\csname endequation*\endcsname\relax
\newcommand{\ud}{\mathrm{d}}
\newcommand{\ui}{\mathrm{i}}
\newcommand{\ue}{\mathrm{e}}
\newcommand{\R}{\mathds{R}}
\newcommand{\C}{\mathds{C}}
\newcommand{\cH}{{\mathcal H}}
\newcommand{\calh}{\mathbf{h}}
\providecommand{\norm}[1]{\left|\left|#1\right|\right|}
\providecommand{\abs}[1]{\lvert#1\rvert}
\newcommand{\pa}{\partial}
\newcommand{\la}{\langle}
\newcommand{\ra}{\rangle}
\renewcommand{\Im}{\operatorname{Im}}
\renewcommand{\Re}{\operatorname{Re}}
\newtheorem{thm}{Theorem}[section]
\newtheorem{lem}[thm]{Lemma}
\newtheorem{prop}[thm]{Proposition}
\begin{document}
\title[Complexified Coherent States]{Complexified coherent states and quantum evolution with non-Hermitian Hamiltonians}

\author{Eva-Maria Graefe$^1$ and Roman Schubert$^2$}
\address{${}^1$ Department of Mathematics, Imperial College London,
London SW7 2AZ, UK}
\address{${}^2$ School of Mathematics, University of Bristol, Bristol, BS8 1TW, UK}

\begin{abstract}
The complex geometry underlying the Schr\"odinger dynamics of coherent states for non-Hermitian Hamiltonians is investigated. In particular two seemingly contradictory  approaches are compared: (i) a complex WKB formalism, for which the centres of coherent states naturally evolve along complex trajectories, which leads to a class of complexified coherent states; (ii) the investigation of the dynamical equations for the real expectation values of position and momentum, for which an Ehrenfest theorem has been derived in a previous paper, yielding real but non-Hamiltonian classical dynamics on phase space for the real centres of coherent states.  
Both approaches become exact for quadratic Hamiltonians. 
The apparent contradiction is resolved building on an observation by Huber, Heller and Littlejohn, that complexified coherent states are equivalent if their centres lie on a specific complex Lagrangian manifold. A rich underlying complex symplectic geometry is unravelled. In particular a natural complex structure is identified that defines a projection from complex to real phase space, mapping complexified coherent states to their real equivalents. 
\end{abstract}
\pacs{03.65Sq, 02.40Tt}
\vspace{0.4cm}

\section{Introduction}
We analyse the geometric structure related to complexified coherent states, that is Gaussian states with a formal complex centre. These states appear naturally in situations where the classical Hamiltonian function is complex valued, or in classically forbidden regions in the description of tunneling processes. Here we focus in particular on the quantum counterpart of complex Hamiltonians, that is the quantum time dependence generated by the Schr{\"o}dinger equation with a non-Hermitian Hamilton operator. Such operators are of interest in many areas in science, in particular in physics and  chemistry. They appear, e.g., in the description of decay processes in quantum mechanics, from early models in nuclear physics to the use of complex scaling in the computation of resonances  \cite{Moi11}. In optics they naturally appear in the study of absorbing or optical active materials \cite{ElGa07,Long08}, and in chemistry absorbing complex potentials are frequently used for numerical simulations \cite{Muga04}. From a more mathematical perspective the spectral theory of non-Hermitian operators has received renewed interest recently, in particular due to questions arising from numerical analysis, such as the concept of the pseudo spectrum \cite{Tref05}. Further, the special class of non-Hermitian PT symmetric operators has received much attention recently, since these operators often possess a purely real spectrum, and have been suggested as a generalisation for the description of closed quantum systems \cite{Bend99a,Bend02b}.

Coherent states play a crucial role in the correspondence of quantum and classical systems. They can be used to quantise classical systems, but also the semiclassical limit of a quantum system can be conveniently expressed with the help of coherent states \cite{Ali00,Hep74,Hel75,Yaff82,Lit86,HubHelLit88}. In the usual formulation quantum to classical correspondence will associate with a non-Hermitian Hamilton operator a Hamilton function which is complex valued, and therefore the corresponding Hamiltonian dynamics will generate complex trajectories in a complexified phase space \cite{Xavi96,Bend99a,Dav99,Kaus00,Bend07b,Curt07,Most10,Brod11}. Considering the semiclassical limit in the sense of the Ehrenfest theorem for expectation values, on the other hand, leads to a classical dynamics on real phase space developed in \cite{GraeSchu11}. Here we will show that the seeming contradiction between these two formulations is related to an ambiguity in the definition of complexified coherent states. It is resolved by the identification of an equivalence class of coherent states with complex centres, \cite{HubHel87,HubHelLit88}, beautifully encoded in the concept of Lagrangian manifolds.

The plan of the paper is as follows. In the next section we summarise some of  the complex symplectic geometry which underlies   the properties of complexified coherent states. This will provide the framework for the following  section where we consider quantum and classical dynamics. We  focus on the discussion of quadratic non-Hermitian operators for which the quantum classical correspondence is exact and  which form the basis for semiclassical considerations of more general situations. We will in particular compare the structures which emerge from an extension of WKB theory to the non-Hermitian case with an extension of the Ehrenfest Theorem \cite{GraeSchu11}.

\section{Coherent states and complex structures}
It is well known that a manifold of coherent states can be interpreted as the phase space of the corresponding classical system \cite{Yaff82,Zhan90,Gnut98}, and how the symplectic structure of classical mechanics naturally arises from the geometry of coherent states. What is perhaps less appreciated, is that the coherent state manifold is further equipped with a metric and a complex structure, which is as well inherited to the classical system. As the metric structure does not appear in classical Hamiltonian equations of motion, it can be easily overlooked. This is different in the context of dissipative classical systems, where in addition to the symplectic flow of Hamiltonian dynamics, a metric gradient flow often appears. These types of dynamics are sometimes referred to as metriplectic flows \cite{Guha07}. It has recently been pointed out, how similar structures arise in the semiclassical limit of non-Hermitian quantum theories, where the metric of the classical phase space is provided by the metric on the space of coherent states \cite{Grae10,Grae10b,GraeSchu11}. 

Let us now recall  how certain classes of Gaussian coherent states endow classical phase space with a metric and  a complex structure. 
We begin with a brief review on the familiar case of real coherent states. 
Consider a  family of Gaussian states
\begin{equation}\label{eq:coh_state_r}
\psi_Z^B(x)= \frac{(\det \Im B)^{1/4}}{(\pi\hbar)^{n/4}} \ue^{\frac{\ui}{\hbar}[P\cdot(x-Q)+\frac{1}{2} (x-Q)\cdot B (x-Q)]} 
\,\, ,
\end{equation}
with $Z=(P,Q)\in \mathds{R}^n\times \mathds{R}^n$, and  $B\in M_n(\C)$, where $B$  is symmetric and has positive imaginary part, $\Im B>0$. This last  condition ensures that the state  is in $L^2(\R^n)$ and the prefactor is chosen such that the state is normalised to one.

This coherent state manifold can in the semiclassical limit be identified with the classical phase space via the centre $Z$, 
and the matrix $B$ defines a metric and a complex structure on phase space. The metric emerges in a natural way 
  in a phase space formulation of quantum mechanics, using for example the Wigner function, see \cite{Lit86} for the following. The Wigner function of the state \eqref{eq:coh_state_r} is a Gaussian centred around $Z=z$, and localised on the order of $\hbar$:
\begin{equation}\label{eq:Wigner}
W(z')=\frac{1}{(\pi\hbar)^{d}}\ue^{-\frac{1}{\hbar} (z'-Z)\cdot G(z'-Z)} \,\, ,
\end{equation}
where $z'=(p',q')$ denotes the coordinate and momentum variables, and the positive symmetric matrix $G$ is related to $B$ via 
\begin{equation}
\label{eq:defG}
G=\left(\begin{array}{cc} I & 0\\ -\Re B& I\end{array}\right)\left(\begin{array}{cc} [\Im B]^{-1} & 0\\ 0 & \Im B\end{array}\right)\left(\begin{array}{cc} I & -\Re B\\ 0 & I\end{array}\right).
\end{equation}
Hence the matrix  $G$ defines a metric on phase space. This metric $G$ has the additional property that it is symplectic, i.e., it satisfies $G\Omega G=\Omega$, where 
 $\Omega$ denotes the antisymmetric matrix 
\begin{equation}\label{eq:defOm}
\Omega=\left(\begin{array}{cc} 0 &-I_n\\ I_n & 0\end{array}\right).
\end{equation}
Since $\Omega\Omega=-I$ this implies that $-\Omega G \Omega=G^{-1}$ and using this it is easy to see that 
\begin{equation}\label{eq:defJ}
J:=-\Omega G\,\, ,
\end{equation}
defines  a complex structure on phase space, i.e., it satisfies $J^2=-I$. Recall that a general  $\Omega$-compatible complex 
structure on phase space is a symplectic matrix $J$ such that  $J^2=-I$,  and the matrix $\Omega J$  is positive definite. 
For later use we note that by \eqref{eq:defG} the complex structure $J$ can be expressed in terms of $B$ as
\begin{equation}\label{eq:J}
\begin{split}
J&=\begin{pmatrix}-\Re B [\Im B]^{-1} & \Im B+\Re B[\Im B]^{-1}\Re B\\ -[\Im B]^{-1} & [\Im B]^{-1} \Re B\end{pmatrix}\\
&=
\begin{pmatrix} -\Re B & I \\ - I & 0\end{pmatrix}\begin{pmatrix} [\Im B]^{-1} & 0 \\ 0 & \Im B\end{pmatrix}\begin{pmatrix} I & -\Re B\\ 0 & I \end{pmatrix}\,\, . 
\end{split}
\end{equation}Ä

The physical meanings of the centre $Z$ and the metric $G$ become apparent when considering expectation values and variances of physical observables $\hat A$. Let $\hat A$ be  the Weyl quantisation of a smooth classical observable $A$, then 
the expectation value and variance of $\hat A$ in the state  \eqref{eq:coh_state_r} are given by 
\begin{equation}
\frac{\la \psi, \hat A\psi\ra}{\norm{\psi}^2}=A(Z)+O(\hbar)\,\, ,\quad (\Delta\hat A)_{\psi}^2
= \frac{\hbar}{2}\nabla A(Z)\cdot G^{-1}\nabla A(Z)+O(\hbar^2)\,\, , 
\end{equation}
i.e., $Z$ is the centre of the phase space distribution of $\psi$ and $G$ determines its variance. Thus, in the limit of $\hbar\to0$ each coherent state collapses to a phase space point $Z$, and the matrix $G$ encodes a local metric at this point.

Let us now extend the previous considerations to the case that the coherent state is formally centred at a complex phase space point, i.e., we consider Gaussian coherent states on $\R^n$ similar to \eqref{eq:coh_state_r}, but with a complex centre $z=(p,q)\in \C^n\times \C^n$, 
\begin{equation}\label{eq:coh_state_c}
\psi_z^B(x)= \frac{(\det \Im B)^{1/4}}{(\pi\hbar)^{n/4}} \ue^{\frac{\ui}{\hbar}[p\cdot(x-q)+\frac{1}{2} (x-q)\cdot B (x-q)]} 
\,\, ,
\end{equation}
and  $B\in M_n(\C)$ is again a symmetric $n\times n$ matrix with $\Im B>0$. Note that while the centre is formally chosen complex, the wave function can still be viewed as a function of a real coordinate $x\in\R^n$, and the condition $\Im B>0$ guarantees that it is in $L^2(\R^n)$.
Similar states were considered previously by Huber, Heller and Littejohn, \cite{HubHel87,HubHelLit88}, and it was noted that different choices of the complex centre $z$ can lead to the same quantum state. In particular, it was found that two centres, $z$ and $z'$, define the same quantum state if  
\begin{equation}\label{eq:LB}
z-z'\in L_B:=\{(Bq,q)\, ;\, q\in \C^n\}\,\, ,
\end{equation}
where $L_B$ is a natural complex Lagrangian space associated with the state \eqref{eq:coh_state_c} which we will analyze in more detail below. We will show here that this result is closely related to the complex structure $J$ induced by $B$ and can be reformulated in terms of a natural projection from complex phase space to real phase space defined by 
\begin{equation}\label{eq:proj_J}
P_J(z):=\Re z +J \Im z\,\, ,
\end{equation}
i.e., $P_J(\Re z +\ui \Im z)=\Re z+J \Im z$, where $z\in \C^n\times C^n$ and the real and imaginary parts are taken component-wise. 

The main result of this section can now be formulated  as follows. 
\begin{thm}\label{thm:proj} Let $\psi_z^B$ be the coherent state \eqref{eq:coh_state_c} and $P_J$ be the projection \eqref{eq:proj_J} defined 
in terms of the  complex structure \eqref{eq:J}, then 
\begin{equation}\label{eq:proj_coh_c}
\psi_z^B=\ue^{\frac{\ui}{\hbar}\sigma(z,P_J(z))} \psi_{P_J(z)}^B\,\, ,
\end{equation}
where with $z=(p,q)$  and $P_J(z)=(P,Q)$ we have 
\begin{equation}
\sigma(z,P_J(z))=\frac{1}{2}(P+p)\cdot (Q-q)\,\, .
\end{equation}
Furthermore the Wigner function of this state is
\begin{equation}\label{eq:Wigner_c}
W(z')=\frac{\ue^{-2\Im \sigma(z,P_J(z))/\hbar}}{(\pi\hbar)^n}\ue^{-\frac{1}{\hbar} (z'-P_J(z))\cdot G(z'-P_J(z))} \,\, .
\end{equation}
\end{thm}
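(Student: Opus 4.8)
The plan is to establish \eqref{eq:proj_coh_c} in two independent steps and then read off \eqref{eq:Wigner_c}. The first step is an algebraic ``shift lemma'': if two complex centres differ by an element of $L_B$, then the corresponding Gaussians \eqref{eq:coh_state_c} agree up to a (complex) phase. The second step is a geometric identity: $z$ and $P_J(z)$ always differ by an element of $L_B$, which is where the complex structure \eqref{eq:J} enters. The Wigner statement then follows from \eqref{eq:Wigner} together with the quadratic dependence of the Wigner function on the state.

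For the shift lemma, fix $z=(p,q)$ and $z'=(p',q')$ with $z-z'=(Ba,a)$, $a\in\C^n$, i.e.\ $q-q'=a$ and $p-p'=Ba$. Writing the exponent of $\psi_{z'}^B$ in the variable $u=x-q$ and using that $B$ is symmetric, one finds that it equals the exponent of $\psi_z^B$ plus a term linear in $u$ with coefficient proportional to $(p-p')-Ba$, plus a $u$-independent term $-\tfrac{\ui}{\hbar}\big(p'\cdot a+\tfrac12 a\cdot Ba\big)$. The linear term vanishes by hypothesis, and since the two prefactors coincide this gives $\psi_z^B=\ue^{\frac{\ui}{\hbar}\sigma}\psi_{z'}^B$ with $\sigma=-p'\cdot a-\tfrac12 a\cdot Ba$. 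Substituting $a=q-q'$, $Ba=p-p'$ and symmetrising yields $\sigma=\tfrac12(p+p')\cdot(q'-q)$, which for $z'=(P,Q)$ is precisely the claimed $\sigma(z,P_J(z))$.

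It remains to check that $z-P_J(z)\in L_B$. By \eqref{eq:proj_J}, $z-P_J(z)=(\ui I-J)\Im z$, and $J^2=-I$ gives $(\ui I-J)v=\ui(I+\ui J)v$, so $z-P_J(z)$ lies in the $(-\ui)$-eigenspace of $J$ on $\C^{2n}$. The claim therefore reduces to $L_B=\ker(J+\ui I)$. Using the factored form \eqref{eq:J} and $B=\Re B+\ui\Im B$, a short computation gives $J\big(Bq,q\big)=\big(-\ui Bq,-\ui q\big)$ for every $q\in\C^n$ --- the second block already exhibits the eigenvalue, and the first block reproduces $B$ times the second. Hence $L_B\subseteq\ker(J+\ui I)$, and since both are $n$-dimensional complex subspaces of $\C^{2n}$ they coincide; this also recovers the Huber--Heller--Littlejohn criterion \eqref{eq:LB}. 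I expect this block computation to be the only genuine obstacle: it is where the explicit shape of $J$ and the hypothesis $\Im B>0$ (ensuring $[\Im B]^{-1}$ exists and $J$ is a bona fide complex structure) are used. Everything else is bookkeeping, and one may equally well verify the relation ``top block $=B\cdot$ bottom block'' for $(\ui I-J)\Im z$ directly from \eqref{eq:J} without invoking eigenspaces.

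Finally, since $P_J(z)=(P,Q)$ is real, $\psi_{P_J(z)}^B$ is of the form \eqref{eq:coh_state_r} with real centre $Z=P_J(z)$, so its Wigner function is \eqref{eq:Wigner} with $Z$ replaced by $P_J(z)$. The Wigner function is quadratic in the state, hence multiplying $\psi_{P_J(z)}^B$ by the scalar $\ue^{\frac{\ui}{\hbar}\sigma(z,P_J(z))}$ multiplies the Wigner function by $\big|\ue^{\frac{\ui}{\hbar}\sigma}\big|^2=\ue^{-2\Im\sigma(z,P_J(z))/\hbar}$, which is exactly \eqref{eq:Wigner_c}. As a consistency check, integrating \eqref{eq:Wigner_c} and using that $G$ is symplectic (so $\det G=1$) reproduces $\norm{\psi_z^B}^2=\ue^{-2\Im\sigma(z,P_J(z))/\hbar}$.
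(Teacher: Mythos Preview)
Your proof is correct, but it is organised differently from the paper's. The paper proceeds by a single direct computation: writing $\psi_z^B=C\ue^{\ui S/\hbar}$, it locates the real point $Q$ where $\Im S$ is minimal, sets $P=\nabla S(Q)$, Taylor-expands $S$ around $Q$ (which is exact since $S$ is quadratic), and then checks from the explicit formulas \eqref{eq:Qq}--\eqref{eq:Pp} that $(P,Q)=P_J(z)$ and that $B(Q-q)=P-p$, from which $\sigma(z,P_J(z))=S(Q)$ drops out. The identification $\ker P_J=L_B$ is only established afterwards, in a separate discussion.

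You instead split the argument into a reusable \emph{shift lemma} (two centres differing by an element of $L_B$ give the same state up to the explicit phase) and a purely geometric claim, $z-P_J(z)\in L_B$, which you prove via the eigenspace identity $L_B=\ker(J+\ui I)$. This makes the Lagrangian structure the engine of the proof rather than a corollary, and it recovers the Huber--Heller--Littlejohn criterion \eqref{eq:LB} along the way; the price is that the block computation $J(Bq,q)=-\ui(Bq,q)$ and the dimension count replace the paper's explicit formulas for $P$ and $Q$. Both routes rely on the same algebra (your relation $B(q-q')=p-p'$ is exactly the paper's $B(Q-q)=P-p$), and the Wigner step is identical.
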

In other words, the complex "centre" $z=\Re z +\ui \Im z$ of the state \eqref{eq:coh_state_c} is projected 
to the real centre $Z=\Re z+J\Im z$. Hence the coherent state centred at $z$ is physically equivalent to the one centred at $Z=P_J(z)$. 

\begin{proof}
We can write a coherent state \eqref{eq:coh_state_c} in the form $\psi(x)=C\ue^{\frac{\ui}{\hbar}S(x)}$ 
with 
\begin{equation}\label{eq:Scomplex}
S(x)=p\cdot (x-q)+\frac{1}{2}(x-q)\cdot B(x-q)\,\, ,
\end{equation}
and some constant $C$. The crucial step is to note that this state is 
concentrated around the point where the imaginary part of $S(x)$ is minimal, 
but since  the parameter $z=(p,q)\in \C^n\times\C^n$ 
can be complex the minimum 
need not be located at $x=q$. Let us introduce $Z=(P,Q)$  by the conditions
\begin{equation}
\nabla \Im S(Q)=0\,\, \quad \text{and} \quad P=\nabla \Re S(Q)=\nabla S(Q)
\end{equation}
then $Q$ is the minimum of the imaginary part of $S$ and by expanding $S(x)$ up to second order 
around $x=Q$ we can rewrite $S(x)$ as 
\begin{equation}\label{eqS_S}
S(x)=S(Q)+P\cdot (x-Q)+\frac{1}{2} (x-Q)\cdot B(x-Q)\,\, .
\end{equation}
The complex structure will now appear if we express $Z$ in terms of $z$. 
We find $\nabla \Im S(x)=\Im [p+B(x-q)]=\Im p+\Im B(x-\Re q)-\Re B\Im q$ and
thus the condition $\nabla \Im S(Q)=0$ gives 
\begin{equation}\label{eq:Qq}
Q=\Re q+[\Im B]^{-1}\Re B\Im q-[\Im B]^{-1}\Im p.
\end{equation}
Since $\nabla \Re S(x)=\Re [p+B(x-q)] =\Re p+\Re B(x-\Re q)+\Im B\Im q$ we obtain further
\begin{equation}\label{eq:Pp}
P=\Re p - \Re B [\Im B]^{-1}\Im p +(\Re B [\Im B]^{-1}\Re B+\Im B)\Im q\,\, .
\end{equation}
These  two equations yield $Z=(P,Q)=P_J(z)$, with $J$ given by  \eqref{eq:J}, and hence 
with \eqref{eqS_S} we find 
\begin{equation}
\psi_z^B=\ue^{\ui S(Q)/\hbar}\psi_{P_J(z)}^B\,\, .
\end{equation}
It remains to compute $S(Q)=p\cdot(Q-q)+\frac{1}{2}(Q-q)\cdot B(Q-q)$. From \eqref{eq:Qq} and 
\eqref{eq:Pp} we find $B(Q-q)=P-p$ and hence 
\begin{equation}
S(Q)=\frac{1}{2}(P+p)\cdot (Q-q)=\sigma(z,P_J(z))\,\, .
\end{equation}
The form of the Wigner function \eqref{eq:Wigner_c} follows from  \eqref{eq:proj_coh_c} and \eqref{eq:Wigner}.
\end{proof}

In the remainder of this section we want to elucidate the complex symplectic geometry underlying and connecting the complex matrix $B$ in the definition of a coherent state \eqref{eq:proj_coh_c},  the complex structure 
$J$ \eqref{eq:J},  and the Lagrangian submanifold $L_B$ \eqref{eq:LB}. Obviously $J$ and $L_B$ are both defined in terms of $B$. We can further show that there are one-to-one relationships between all three of them. 

Let us recall that a linear subspace $L\subset \C^n\times\C^n$ is called Lagrangian if 
$\Omega|_L=0$ and $\dim L=n$, and positive Lagrangian if in addition the 
quadratic form 
\begin{equation}
h(z,z'):=\frac{\ui}{2} z\cdot \Omega \bar z'
\end{equation}
is positive on $L$, i.e., $h(z,z)>0$ for all $z\in L$. It is a well known result \cite{HorIII} that any positive Lagrangian subspace can be written in the form \eqref{eq:LB}: 
\begin{lem}\label{lem:LB}
The subspace $L_B=\{(Bq,q);\, q\in\C^n\}$ defined in \eqref{eq:LB} is Lagrangian if $B$ is symmetric, and  positive Lagrangian if $\Im B>0$. On the other hand, if $L\subset \C^n\times\C^n$ is a positive 
Lagrangian subspace then there exists a symmetric $B\in M_n(\C)$ with $\Im B>0$ such that 
$L=\{(Bx,x)\, \, ;\, x\in \C^d\}$\,\, .
\end{lem}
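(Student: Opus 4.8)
The plan is to prove both halves by using the block structure \eqref{eq:defOm} of $\Omega$ and the fact that $L_B$ is graphed over its second component. For the first half, the map $q\mapsto(Bq,q)$ is injective, so $\dim_{\C}L_B=n$, and a direct computation for $z=(Bq,q)$, $z'=(Bq',q')$ gives the identity
\begin{equation}\label{eq:plan_bilin}
z\cdot\Omega z'=-(Bq)\cdot q'+q\cdot Bq'=q\cdot(B-B^{T})q'\,\, ,
\end{equation}
which vanishes for all $q,q'$ exactly when $B=B^{T}$; hence $L_B$ is Lagrangian iff $B$ is symmetric. The analogous computation with $\bar z=(\bar B\bar q,\bar q)$ yields $z\cdot\Omega\bar z=-2\ui\,\Im\!\bigl((Bq)\cdot\bar q\bigr)$, so
\begin{equation}\label{eq:plan_herm}
h(z,z)=\Im\!\bigl(\bar q\cdot Bq\bigr)=\bar q\cdot(\Im B)\,q\,\, ,
\end{equation}
the last step using that for symmetric $B$ both $\bar q\cdot(\Re B)q$ and $\bar q\cdot(\Im B)q$ are real. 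Thus $h(z,z)>0$ for $q\neq0$ whenever $\Im B>0$, which is positivity.

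For the converse, let $L\subset\C^n\times\C^n$ be a positive Lagrangian subspace. The key step is to show that the projection $\pi_2\colon L\to\C^n$, $(p,q)\mapsto q$, onto the second component is a linear isomorphism. Since $\dim_{\C}L=n$ it suffices to prove injectivity: if $(p,0)\in L$ then the block form of $\Omega$ gives $h\bigl((p,0),(p,0)\bigr)=\tfrac{\ui}{2}(p,0)\cdot(0,\bar p)=0$, and positivity forces $(p,0)=0$. Hence $\pi_2$ is bijective, and composing its inverse with the projection $(p,q)\mapsto p$ onto the first component produces a linear map $B\in M_n(\C)$ with $L=\{(Bq,q);\,q\in\C^n\}=L_B$.

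It then remains only to read off the properties of $B$. The Lagrangian condition $\Omega|_L=0$ together with \eqref{eq:plan_bilin} gives $B=B^{T}$; with $B$ now symmetric, \eqref{eq:plan_herm} is valid, and positivity of $h$ on $L$ gives $\bar q\cdot(\Im B)q>0$ for $q\neq0$, i.e.\ $\Im B>0$. (This statement, in the wider context of the Lagrangian Grassmannian, is classical, cf.\ \cite{HorIII}.)

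The only real obstacle is the injectivity of $\pi_2$; this is precisely where the positivity hypothesis is used, and without it one could only conclude that a general complex Lagrangian is a graph over some mixed set of $p$ and $q$ coordinates. The remaining work is bookkeeping: the identities \eqref{eq:plan_bilin} and \eqref{eq:plan_herm} must be obtained by carefully distinguishing the bilinear pairing $z\cdot\Omega z'$ from the sesquilinear form $h(z,z')=\tfrac{\ui}{2}z\cdot\Omega\bar z'$, and one must keep in mind throughout that ``symmetric'' here means $B=B^{T}$, not $B=B^{*}$.
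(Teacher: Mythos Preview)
Your proof is correct and follows essentially the same line as the paper's: the forward direction is handled by the same direct computation of $z\cdot\Omega z'$ and $h(z,z)$ on $L_B$, and the converse by showing that the projection $\pi:(p,q)\mapsto q$ has trivial kernel via positivity, so that $L$ is the graph of a linear map $q\mapsto Bq$. The only differences are cosmetic---you state the ``iff'' in \eqref{eq:plan_bilin} and spell out the reality of $\bar q\cdot(\Re B)q$ and $\bar q\cdot(\Im B)q$, while the paper writes the positivity computation as $x\cdot\Im B\,\bar x$---and your closing paragraph of commentary is helpful but not part of the argument proper.
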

\begin{proof}
It is clear from the definition that $\dim L=n$. To check that $\Omega|_L=0$ we choose 
$z=(Bx,x)\in L$ and $z'=(Bx',x')\in L$ and find $z\cdot \Omega z'=-Bx\cdot
x'+x\cdot Bx'=x\cdot (B^T-B)x'=0$, 
since $B$ is symmetric. 
To check positivity we consider $h_L(z,z)=\ui z\cdot \Omega \bar z/2$ with $z=(Bx,x)$ which gives
\begin{equation}
h_L(z,z)=\frac{\ui}{2}[-(Bx)\cdot \bar x+x\cdot \bar B\bar
  x]=\frac{\ui}{2}x\cdot [ \bar B-B^T]\bar x
=x\cdot \Im B\bar x\geq 0\,\, .
\end{equation}

Now assume $L$ to be a positive Lagrangian subspace and consider the projection $\pi :L\to \C^n$ defined by $\pi(p,q)=q$. Then 
$\ker \pi=\{0\}$ because if $z=(p,q)\in \ker \pi$, then $q=0$ and hence $\ui z\cdot \Omega \bar z/2=0$, thus positivity of $L$ implies $z=0$. Therefore the map $\pi$ is invertible and since it leaves the $q$ component invariant 
the inverse must be of the form $\pi^{-1}(q)=(Bq,q)$ for some matrix $B$, i.e., $L=\{(Bq,q)\,\, ;\, q\in \C^n\}$. 
That $B$ is symmetric and has positive imaginary part follows now as before from the fact that $L$ is positive and Lagrangian. 
\end{proof}

This establishes the one-to-one correspondence between Lagrangian subspaces and complex symmetric matrices with positive imaginary part. Let us now relate complex structures and positive Lagrangian subspaces. 
By \eqref{eq:proj_J} and \eqref{eq:proj_coh_c}, the complex centres $z$ and $z'$ define the same state if $P_J(z-z')=0$. Hence, the set 
of equivalent complex centres is given by 
\begin{equation}\label{eq:LP}
L:=\ker P_J=\{z\in \C^n\times \C^n\, :\, P_J(z)=0\,\, \} \,\, .
\end{equation}
According to the work of Heller \textit{et. al.} \cite{HubHel87,HubHelLit88} we expect that $L=L_B$. Let us, however, first show that  $L$ is actually a positive Lagrangian manifold, and furthermore, that the set of $\Omega$-compatible complex structures is isomorphic to the set of positive Lagrangian subspaces of $\C^{n}\times\C^n$. 

\begin{lem} Let $J$ be a $\Omega$-compatible complex structure on $\R^n\times\R^n$ (see the definition after equation
\eqref{eq:defJ}) and define 
\begin{equation}
 P_J(z):=\Re z+J\Im z\,\, ,
\end{equation}
then 
\begin{equation}
L:=\ker P_J= \{z\in V^{\C}\,\, ;\, \Re z+J\Im z=0\}
\end{equation}
is a positive Lagrangian subspace. Conversely, for every positive Lagrangian subspace $L$ there exists a 
compatible complex structure $J_L$  such that $L=\ker
P_{J_L}$, i.e., 
\begin{equation}\label{eq:projker}
z\in L\, \Leftrightarrow\,\, \Re z+J_L\Im z=0\,\, .
\end{equation}
\end{lem}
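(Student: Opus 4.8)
The plan is to handle the two implications separately: the forward direction by a direct computation using the real parametrisation of $\ker P_J$, and the converse by reducing to what was already established in the proof of Theorem~\ref{thm:proj}.

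\emph{Forward direction.} Writing $z=a+\ui b$ with $a,b\in\R^n\times\R^n$, the defining equation $P_J(z)=a+Jb=0$ forces $a=-Jb$, so
\begin{equation}
L=\ker P_J=\{(\ui I-J)b\, ;\, b\in\R^n\times\R^n\}\,\, .
\end{equation}
First I would record the elementary facts that $b\mapsto(\ui I-J)b$ is real-injective (its real part is $-Jb$ and $J$ is invertible), so $\dim_\R L=2n$, and that $L$ is a complex subspace: for $z=a+\ui b\in L$ one has $\ui z=-b+\ui a$ with $P_J(\ui z)=-b+Ja=-b-J^2 b=0$ by $J^2=-I$. Hence $\dim_\C L=n$. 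To check $\Omega|_L=0$ I would compute $z\cdot\Omega z'=b\cdot(\ui I-J^T)\Omega(\ui I-J)b'$ for $z=(\ui I-J)b$, $z'=(\ui I-J)b'$, and expand the central matrix using the two defining properties of a compatible complex structure, $J^T\Omega J=\Omega$ and $J^2=-I$, which combine to $J^T\Omega=-\Omega J$; the four resulting terms cancel in pairs. The same expansion applied to $z\cdot\Omega\bar z$ (with $\bar z=(-\ui I-J)b$) leaves only $z\cdot\Omega\bar z=-2\ui\, b\cdot\Omega J b$, since $b\cdot\Omega b=0$ by antisymmetry of $\Omega$; thus $h_L(z,z)=b\cdot\Omega J b>0$ whenever $b\neq0$, because $\Omega J$ is positive definite. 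So $L$ is a positive Lagrangian subspace.

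\emph{Converse.} Rather than constructing $J_L$ directly, I would invoke Lemma~\ref{lem:LB}: a positive Lagrangian $L$ equals $L_B$ for a unique symmetric $B$ with $\Im B>0$. Let $J_L$ be the complex structure~\eqref{eq:J} attached to this $B$, which is $\Omega$-compatible by the discussion following~\eqref{eq:defJ}. The identity $\ker P_{J_L}=L_B$ is then exactly the computation in the middle of the proof of Theorem~\ref{thm:proj}: there $P_{J_L}(z)=(P,Q)$ is characterised by $\nabla\Im S(Q)=0$, $P=\nabla S(Q)$, together with $B(Q-q)=P-p$. Setting $P_{J_L}(z)=0$ forces $Q=0$, $P=0$, hence $-Bq=-p$, i.e. $z=(Bq,q)\in L_B$; conversely, for $z=(Bq,q)$ one has $\nabla S(x)=Bx$, so the minimality condition reads $(\Im B)Q=0$, giving $Q=0$ and then $P=0$, i.e. $P_{J_L}(z)=0$. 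This yields $\ker P_{J_L}=L_B=L$, which is~\eqref{eq:projker}; uniqueness of $B$ makes $J_L$ well defined.

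The one genuinely error-prone step is the matrix algebra in the forward direction: one must carefully use $J^{-1}=-J$ to rewrite $J^T\Omega$ as $-\Omega J$ and to see that the cross terms cancel — a sign slip destroys both the Lagrangian and the positivity claims. Everything else is bookkeeping, the main conceptual point being that the converse needs no new construction, only the observation that $\ker P_J=L_B$ was already shown inside the proof of Theorem~\ref{thm:proj}.
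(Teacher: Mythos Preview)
Your forward direction is correct and is essentially the paper's argument with the roles of real and imaginary parts swapped: the paper parametrises $L$ as $\{(I+\ui J)v:v\in\R^{2n}\}$ via $v=\Re z$, while you use $\{(\ui I-J)b:b\in\R^{2n}\}$ via $b=\Im z$. The expansions of $z\cdot\Omega z'$ and $z\cdot\Omega\bar z$ are then the same computation, and your identity $J^T\Omega=-\Omega J$ is exactly the paper's observation that $\Omega J=G$ is symmetric.

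For the converse, however, you take a genuinely different route. The paper constructs $J_L$ \emph{intrinsically} from $L$: it shows that the real-part map $\Re_L:L\to\R^{2n}$ is injective (by positivity), writes its inverse as $v\mapsto v+\ui J_L v$, and then reads off the compatibility of $J_L$ by running the forward computation backwards. Your approach instead passes through the parametrisation $L=L_B$ of Lemma~\ref{lem:LB}, takes $J_L$ from the explicit formula~\eqref{eq:J}, and verifies $\ker P_{J_L}=L_B$ using the relations $B(Q-q)=P-p$ from the proof of Theorem~\ref{thm:proj}. This is valid, but note that in the paper the identity $\ker P_J=L_B$ is established only \emph{after} the present lemma (the matrix computation around~\eqref{eq:L=LB}), not inside Theorem~\ref{thm:proj}; you are effectively anticipating that calculation. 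The paper's intrinsic argument is cleaner in that it never invokes the block formula for $J$ and makes the bijection between positive Lagrangians and compatible complex structures manifest; your route has the virtue of tying everything back to the coherent-state computation, at the cost of depending on the explicit $B$-parametrisation.
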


\begin{proof} Note that since $J^2=-I$ the relation $ \Re z+J\Im z=0$ can be rewritten as 
\begin{equation}\label{eq:LgraphJ}
\Im z=J\Re z\,\, , 
\end{equation}
i.e., $z\in L$ means $z=(I+\ui J)\Re z$. Since $J$ is non-degenerate we clearly have $\dim_{\C} L=n$, and 
for $z,z'\in L$ we get 
\begin{equation}\label{eq:Lag_Jprop}
\begin{split}
z\cdot \Omega z'&=\Re z\cdot (I+\ui J^T)\Omega (I+\ui J)\Re z'\\
&=\Re z\cdot (\Omega -J^T\Omega J)\Re z'+\ui \Re z\cdot (J^T\Omega + \Omega J)\Re z' 
\end{split}
\end{equation}
and if  $J$ is symplectic and 
$\Omega J=G$ symmetric we get that $z\cdot \Omega z'=0$, and thus $L$ is Lagrangian. Furthermore we find 
for $z= (I+\ui J)\Re z\in L$ 
\begin{equation}\label{eq:LGpos}
\frac{\ui}{2} z\cdot \Omega \bar z=\Re z \cdot G \Re z\,\, ,
\end{equation}
hence $L$ is positive. 

On the other hand, assume $L\subset \C^n\times \C^n $ to be a positive Lagrangian subspace and consider the 
map $\Re_L : L\to \R^n\times \R^n$, defined by $\Re_L(z)=\Re z$. We claim that this map is invertible. To see this assume 
$z\in \ker \Re_L$, i.e, $\Re z=0$, then $\ui z\cdot \Omega \bar z=\ui \Im z\cdot \Omega \Im z=0$, hence $z =0$ by the positivity of $L$, so 
$\ker \Re_L=\{0\}$ and  $\Re_L$ is invertible as claimed. The inverse must be of the form 
$\Re_L^{-1}(v)=v+\ui Jv$ for a linear map $J:\R^n\times\R^n\to \R^n\times\R^n$. Then \eqref{eq:Lag_Jprop} with $\Re z =v$ shows that if 
$L$ is Lagrangian $J$ must be symplectic and $G:=\Omega J$ symmetric, and \eqref{eq:LGpos} 
 shows that $G$ must be positive. Then $J^2=\Omega G\Omega G=\Omega \Omega =-I$, therefore $J$ is a 
 compatible complex structure. 
 \end{proof}
   
In summary, we have shown that the set of complex symmetric matrices with positive imaginary part, the set of positive Lagrangian 
subspaces, and the  set of $\Omega$-compatible complex structures are all isomorphic to each other. What we have not shown 
yet is that $L_B$ is actually mapped to the complex structure \eqref{eq:J}, i.e., that 
\begin{equation}\label{eq:L=LB}
\ker P_J=L_B\,\, .
\end{equation}
Since $\dim \ker P_J=\dim L_B$ it is enough to show that $L_B\subset \ker P_J$, i.e., that 
for any $z\in L_B$ we have $\Re z+J\Im z=0$. Now any element in $L_B$ is of the form 
$z=(Bq,q)$ for some $q\in \C^n$  and a short calculation gives 
\begin{equation}
z=\bigg[\begin{pmatrix} \Re B & -\Im B \\ I & 0\end{pmatrix}   +\ui    \begin{pmatrix} \Im B & \Re B \\ 0 & I\end{pmatrix}\bigg]   \begin{pmatrix}\Re q\\ \Im q\end{pmatrix} \,\, ,
\end{equation}
 and hence 
$z\in \ker P_J$ for all $z\in L_B$ means
\begin{equation}
\begin{pmatrix} \Re B & -\Im B \\ I & 0\end{pmatrix}+J \begin{pmatrix} \Im B & \Re B \\ 0 & I\end{pmatrix}=0\,\, .
\end{equation}
Solving this equation for $J$ then gives the expression \eqref{eq:J} which we have already encountered. Hence 
\eqref{eq:L=LB} holds. 

For completeness we finally note  that the metric $G=\Omega J $ defines a K{\"a}hler structure on complex phase space which turns $P_J$ into an orthogonal projection: 

\begin{lem} Let $\calh(z,z'):=z\cdot G\bar z'-\ui z\cdot \Omega\bar z'$ be the hermitian inner product on 
$\C^n\times\C^n$ defined by $G=\Omega J$, then $P_J$ is the unique projection onto $\R^n\times\R^n$ which is 
hermitian with respect to $\calh$, i.e., $\calh(P_Jz,z')=\calh(z,P_Jz')$. 
\end{lem}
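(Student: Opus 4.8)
The plan is to reduce both claims to the single structural fact that $L:=\ker P_J$ is precisely the radical of $\calh$, i.e.\ $\calh(z,w)=\calh(w,z)=0$ for every $w\in L$ and every $z\in\C^n\times\C^n$. First I would verify that $\calh$ is Hermitian, $\overline{\calh(z',z)}=\calh(z,z')$, which is immediate from $G^{T}=G$ and $\Omega^{T}=-\Omega$. Note that $\calh$ is only positive \emph{semi}definite --- indeed it vanishes identically on $L$, as follows from the radical computation below --- so ``inner product'' here should be read as ``Hermitian form''; this degeneracy is not a defect but is exactly what makes the rest short, since the coincidence ``radical of $\calh$'' $=L=\ker P_J$ is the whole content.

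The core computation is the inclusion $L\subseteq\operatorname{rad}(\calh)$. By \eqref{eq:LgraphJ} every $w\in L$ can be written $w=(I+\ui J)\Re w$, hence $\bar w=(I-\ui J)\Re w$. Substituting this into $\calh(z,w)=z\cdot G\bar w-\ui z\cdot\Omega\bar w$ gives $\calh(z,w)=z\cdot(G-\ui\Omega)(I-\ui J)\Re w$, and $(G-\ui\Omega)(I-\ui J)=G-\ui GJ-\ui\Omega-\Omega J=0$ because $\Omega J=G$ and hence $GJ=\Omega J^{2}=-\Omega$. Likewise $\calh(w,z)=\Re w\cdot(I+\ui J^{T})(G-\ui\Omega)\bar z$, and $(I+\ui J^{T})(G-\ui\Omega)=G-\ui\Omega+\ui J^{T}G+J^{T}\Omega=0$ because $J$ is symplectic: $J^{T}\Omega J=\Omega$ gives $J^{T}G=J^{T}\Omega J=\Omega$ and $J^{T}\Omega=\Omega J^{-1}=-\Omega J=-G$. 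So $L$ lies in the radical. For the reverse inclusion I would use that $\calh$ is positive definite on $\R^n\times\R^n$, where it reduces to $\calh(v,v)=v\cdot Gv$ with $G>0$: a real vector in the radical must vanish, so the radical meets $\R^n\times\R^n$ trivially, hence has real dimension at most $2n=\dim_{\R}L$, and therefore equals $L$. In particular $\C^n\times\C^n=(\R^n\times\R^n)\oplus L$, so $P_J$ is a genuine projection onto $\R^n\times\R^n$.

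Granting this, both assertions are one-liners. For self-adjointness of $P_J$: given $z,z'$, write $z=P_Jz+l$ and $z'=P_Jz'+l'$ with $l,l'\in\ker P_J=L$; since $L$ is the radical, $\calh(P_Jz,l')=0$ and $\calh(l,P_Jz')=0$, so $\calh(P_Jz,z')=\calh(P_Jz,P_Jz')=\calh(z,P_Jz')$. For uniqueness: let $P$ be any projection onto $\R^n\times\R^n$ with $\calh(Pz,z')=\calh(z,Pz')$ for all $z,z'$. For $v\in\R^n\times\R^n$ and $w\in\ker P$ this yields $\calh(v,w)=\calh(Pv,w)=\calh(v,Pw)=0$; since $\calh(\cdot,w)$ is $\C$-linear and $\R^n\times\R^n$ spans $\C^n\times\C^n$ over $\C$, $\calh(\cdot,w)$ vanishes identically, so $w$ is in the radical, i.e.\ $w\in L$. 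Thus $\ker P\subseteq L$, and as $\dim_{\R}\ker P=2n=\dim_{\R}L$ we get $\ker P=L=\ker P_J$; a projection is determined by its image and kernel, so $P=P_J$.

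The only genuine work is the matrix identity in the second paragraph --- keeping the relations $G=\Omega J$, $J^{2}=-I$ and $J^{T}\Omega J=\Omega$ organised so that both brackets collapse to zero --- and the one conceptual point to keep in view is the (deliberate) degeneracy of $\calh$: once one recognises that its radical is exactly $L=\ker P_J$, both the self-adjointness and the uniqueness are automatic.
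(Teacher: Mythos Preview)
Your proof is correct and rests on the same matrix identities as the paper's --- namely $G=\Omega J$, $GJ=-\Omega$, $J^{T}\Omega=-G$, $J^{T}G=\Omega$ --- used to collapse $(G-\ui\Omega)(I-\ui J)$ and $(I+\ui J^{T})(G-\ui\Omega)$ to zero. The overall strategy (show kernel and image of $P_J$ are $\calh$-orthogonal, then infer hermiticity and uniqueness) is also the same.

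Where you differ is in framing and completeness. The paper only checks that $\calh(z,z')=0$ for $z\in L$ and $z'\in\R^{n}\times\R^{n}$, then asserts in one line that a hermitian projection ``is as well uniquely determined by its image''. You instead prove the stronger fact that $L$ is the full radical of $\calh$, and you are explicit that $\calh$ is only positive \emph{semi}definite --- a point the paper's terminology ``hermitian inner product'' glosses over. This pays off in the uniqueness argument: for a degenerate Hermitian form the statement ``a hermitian projection is determined by its image'' is not automatic, and your argument (kernel of any competing projection must lie in the radical, hence equals $L$ by dimension) supplies exactly the justification the paper omits. So your route is not just equivalent but slightly more complete.
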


\begin{proof}
$P_J$ is a projection, so it is hermitian with respect to $\calh(z,z')$ if 
 the kernel and image are orthogonal to each other. Then it is as well uniquely determined by its image.  Since by \eqref{eq:LgraphJ} any $z\in L=\ker P_J$ is of the form $z=(I+\ui J)x$ for some 
$x\in \R^n\times \R^n$, we get for $z=(I+\ui J)\in L$ and $z'=x'\in \R^n\times \R^n=\Im P_J$ that 
$\calh(z,z')=x\cdot (I+\ui J)^t(G-\ui\Omega)x'=x\cdot [G+J^t\Omega+\ui(J^tG-\Omega)]x'$. But since $G$ is symmetric 
$G=\Omega J$ implies $G=-J^t\Omega$ and from $GJ=-\Omega$ we obtain $J^t G=\Omega$, therefore 
$\calh(z,z')=0$ for all $z\in \ker P_J$ and $z'\in \Im P_J$. 
\end{proof}

We have shown that coherent states with a complex centre are organised along Lagrangian submanifolds of physically equivalent coherent states one of which has a real centre. In what follows we shall investigate the time dependence of these structures under the evolution with non-Hermitian Hamiltonians. In particular, we will focus on the analytically solvable case of quadratic Hamiltonians, which lies at the heart of semiclassical considerations for more general systems.


\section{Schr{\"o}dinger dynamics with complex quadratic Hamiltonians}
Here we will investigate the Schr\"odinger dynamics generated by complex quadratic Hamiltonians that are given as Weyl quantisations of complex quadratic forms on  phase space. For these Hamiltonians semiclassical approximations are exact, and we restrict ourselves to these purely quadratic Hamiltonians to understand the essence of the dynamics in detail. It is straightforward to include also linear terms; here, however, we want to keep the discussion concise.

We continue to denote by $z=(p,q)\in \R^n\times \R^n$ points in phase space and set 
\begin{equation}
\cH(z)=\frac{1}{2} z\cdot Hz\,\, ,
\end{equation}
where $H\in M_{2n}(\C)$ is a complex  symmetric $2n\times 2n$ matrix and the 
quantum Hamiltonians we will consider are given by the Weyl quantisation
of quadratic functions of the form $\cH$,  
\begin{equation}\label{eq:Hamiltonian_quad_def}
\hat \cH =-\frac{\hbar^2}{2} \nabla_x \cdot
H_{pp}\nabla_x+\frac{\hbar}{\ui}x\cdot H_{qp}\nabla_x+\frac{1}{2}x\cdot
H_{qq} x-\frac{\ui\hbar}{2} \tr H_{qp}\,\, ,
\end{equation}
where $H=\begin{pmatrix} H_{pp} & H_{pq}\\ H_{qp} & H_{qq}\end{pmatrix}$.
We will in general allow the matrix $H$ to be time dependent without explicitly indicating this in the notation.  
Our aim is to study the solutions to the time dependent Schr{\"o}dinger equation
\begin{equation}\label{eq:Schrodinger}
\ui\hbar \pa_t\psi=\hat\cH \psi\,\, ,
\end{equation}
for initial states given by coherent states. Since our Hamilton operator is in general not self-adjoint the question of whether this equation has solutions in suitable function spaces is not trivial.  To illustrate the issue, consider the following simple example: If the Hamiltonian is given by $\cH(z)=\ui q^2/2$ the time evolution operator is of the form $U(t)=\ue^{\frac{t}{2\hbar} x^2}$ and taking for instance an 
initial state of the form $\psi_0(x)=\ue^{-\frac{b}{2\hbar} x^2}$ it follows that
\begin{equation}
\psi(t,x)=\ue^{\frac{t-b}{2\hbar} x^2}
\end{equation}
and hence $\psi(t,x)\notin L^2(\R)$ for $t\geq b$. 

Problems of this kind are avoided if the imaginary part of $H$ is chosen to be non-positive. For $\Im H\leq 0$ the Schr{\"o}dinger equation generates a contracting semigroup, and the quadratic case has been studied in some detail. We mention 
\cite{Hor95}, where the Weyl symbols of the time evolution operator have been constructed explicitly using complex symplectic geometry, and \cite{Exn83} for some early rigorous results on the damped harmonic oscillator. Here we will further analyse the consequences on the geometric structures we have highlighted in the previous section. In addition, the case of non-negative $\Im H$ is often of interest, in particular in the context of PT-symmetric quantum systems. Thus, we allow for general complex $H$ here, but we only consider special initial conditions for which explicit solutions can be computed, 
at least for short times.
 
We will investigate the dynamical behaviour of initially Gaussian coherent states that is generated by a Hamiltonian operator of the form \eqref{eq:Hamiltonian_quad_def}. Similar to the real valued case, the class of Gaussian coherent states, now with a complex centre, is invariant under this time evolution, as we shall see in the following. For this purpose we consider time dependent Gaussian coherent states of the form 
\begin{equation}\label{eq:coh_state_time}
\psi(t,x)=\ue^{\ui \alpha(t)} \frac{(\det \Im B(t))^{1/4}}{(\pi\hbar)^{n/4}} \ue^{\frac{\ui}{\hbar}[p(t)\cdot(x-q(t))+\frac{1}{2} (x-q(t))\cdot B(t) (x-q(t))]} =\ue^{\ui \alpha(t)}\psi^{B(t)}_{z(t)}(x)
\,\, ,
\end{equation}
where $z(t)=(p(t),q(t))\in \C^n\times \C^n$, $B(t)\in M_n(\C)$ is symmetric and has positive imaginary part, $\Im B(t)>0$, and 
$\alpha(t)\in \C$. Inserting the state \eqref{eq:coh_state_time} as an ansatz into the Schr{\"o}dinger equation \eqref{eq:Schrodinger} 
and separating terms with different powers of $(x-q)$ yields the following set of differential equations for $(p(t),q(t))$, $B(t)$ and $\alpha(t)$:
\begin{align}
-\dot p+B\dot q&=\cH_q'+B\cH_p'\label{eq:cz1}\\
-\dot B&= \cH_{qq}''+\cH_{pq}''B+B\cH_{qp}''+B\cH_{pp}''B\label{eq:cB1}\\
-\dot \alpha+\frac{\ui}{4}\tr(\dot BB^{-1}) &=-\frac{1}{\hbar}[p\cdot \dot q-\cH] -\frac{\ui}{2}[\tr \cH_{pq}''+\tr(\cH_{pp}''B)]\,\, \label{eq:calpha1},
\end{align}
where $\cH_p', \cH_{pq}'', ...$ denote derivatives of $\cH(z)$ with respect to $p$,  and $p$ and $q$, etc.. If we choose 
$p$ and $q$ to be solutions to Hamiltons equations, i.e., $\dot p=-\cH_q'$ and $\dot q=\cH_p'$, then the first equation is satisfied, and furthermore using the second equation we can simplify the third, thus arriving at the simplified system 
\begin{align}
\dot z&=\Omega H z \label{eq:cz}\\
\dot B&= -H_{qq}-H_{pq}B-BH_{qp}-BH_{pp}B\label{eq:cB}\\
\dot \alpha &=\frac{1}{\hbar}[p\cdot \dot q-\cH(z)] +\frac{\ui}{4}\tr[ H_{pp}B-H_{qq}B^{-1}]\label{eq:calpha}\,\, .
\end{align}
Here  the first equation is Hamilton's equation with a complex Hamilton function and the third equation can be integrated once the first and the second are solved. The solutions to 
the second equation can be obtained most easily using symplectic geometry which will be reviewed in what follows. 
This set of equations is a complex extension of the classical approach to coherent state propagation of Hepp, \cite{Hep74}, and Heller \cite{Hel75}, which is used and developed further in many areas (see, e.g., the review \cite{Lit86} or \cite{Rob07} for an overview of more recent mathematical developments).  

For complex $H$ equation \eqref{eq:cz} leads to complex solutions $z(t)$, even if the  
initial condition is chosen to be real, and thus we will obtain coherent states with complex centres. As discussed in the previous section a complex centre has no direct physical meaning, but using a complex structure it can be projected to a physically meaningful real centre. We will now apply the complex symplectic geometry we developed in the last section to understand the relation between the dynamics of the complex centre and its projection to real space. 

In a previous paper \cite{GraeSchu11} we concentrated on the dynamics of the Wigner function which directly yields the expectation values and hence the real centre of a state. This considerations led to a non-Hermitian version of Ehrenfest's theorem 
 with a new type of classical dynamics emerging in the semiclassical limit. We derived an evolution equation for the Wigner function, which in the case of a quadratic Hamiltonian  reduces to
\begin{equation}\label{eq:SchrWigner}
\hbar \pa_tW(t,z)=-\bigg(-\frac{\hbar^2}{4}\Delta_{\Im H}-\hbar z\cdot \Re H\Omega \nabla-2z\cdot \Im H z\bigg)W(t,z)\,\, ,
\end{equation}
where all derivatives are with respect to $z$, and 
\begin{equation}
\Delta_{\Im H}:=-\nabla\cdot \Omega^T\Im H\Omega\nabla\,\, .
\end{equation}
While the evolution equation in \cite{GraeSchu11} for general Hamiltonians is a semiclassical approximation, the quadratic case \eqref{eq:SchrWigner} is exact. 

If $\psi$ is  of the type \eqref{eq:coh_state_time} the Wigner function is of the form 
\begin{equation}\label{eq:AWigner}
W(t,z)=\frac{\ue^{-\beta(t)}}{(\pi\hbar)^n} \ue^{-\frac{1}{\hbar}(z-Z(t))\cdot G(t)(z-Z(t))}
\end{equation}
with $Z(t)\in \R^n\times\R^n$, a symmetric $G(t)\in M_{2n}(\R)$, and $\beta(t)\in\R$. 
Inserting the ansatz \eqref{eq:AWigner} into equation \eqref{eq:SchrWigner}, and separating different powers of $(z-Z)$, 
leads to the following set of equations
\begin{align}
\dot Z&=\Omega \Re H Z+G^{-1} \Im H Z\label{eq:rZ}\\ 
\dot G&=\Re H\Omega G-G\Omega \Re H-\Im H+G\Omega^T\Im H\Omega G\label{eq:rG}\\
\dot \beta&=-\frac{2}{\hbar} Z\cdot \Im H Z-\frac{1}{2} \tr[\Im H \Omega G\Omega^T]\label{eq:beta}
\end{align}
It can be verified, that this set of equations is also compatible with the dynamical equations \eqref{eq:cz1}, \eqref{eq:cB1}, and \eqref{eq:calpha1} obtained from the coherent state ansatz in the Schr{\"o}dinger equation, if we demand $p$ and $q$ to be real. Thus, equations \eqref{eq:cz}, \eqref{eq:cB}, and \eqref{eq:calpha} are not the unique dynamical equations for the propagation of coherent states for non-Hermitian Hamiltonians. 

The two different sets of equations that we have obtained, \eqref{eq:cz}, \eqref{eq:cB}, and \eqref{eq:rZ}, \eqref{eq:rG}, are supposed to describe the dynamics of the same physical state. In what follows we will discuss how they can be related using complex structure associated with the coherent states. 

\subsection{Symplectic evolution}
To solve the evolution equations obtained above, in particular the nonlinear matrix Ricatti equations \eqref{eq:cB} and \eqref{eq:rG}, we have to understand how the geometric structures discussed in the previous section evolve in time under the action of complex Hamiltonian dynamics. For this purpose, we first investigate the action of a linear symplectic map on a positive Lagrangian subspace $L$, i.e., we change $L$ to $SL$ with $S\in Sp(n, \C)$.
Here $Sp(n,\R)$ and $Sp(n,\C)$ denote the set of real or complex $2n\times 2n$  matrices $S$ with $S^T\Omega S=\Omega$, i.e., the real and complex linear symplectic groups.  
Since any $z\in SL$ is of the form $z=Sz_0$ for some $z_0\in L$ we get 
$z\cdot \Omega z'=z_0\cdot S^T\Omega S z_0'=z_0\cdot \Omega z_0'=0$, since $L$ is Lagrangian, and thus $SL$ is, too. Furthermore  
\begin{equation}
\frac{\ui}{2} z'\cdot \Omega \bar z'=\frac{\ui}{2} z\cdot S^T \Omega \overline{S} \bar z\, ,
\end{equation}
thus, if $\overline S=S$, i.e., $S\in Sp(n, \R)$, then $SL$ is positive, too. If $S$ is complex, $SL$ does not have to be positive any more. 

We will mainly consider situations in which $S$ is the solution to Hamilton's equation, i.e, $S(t)$ satisfies
\begin{equation}\label{eq:HamS}
\dot S=\Omega H S\,\, ,\quad\text{with}\quad S(t=0)=I\,\, ,
\end{equation} 
where $H\in M_{2n}(\C)$ is symmetric.

\begin{lem} Assume $L$ to be a positive Lagrangian subspace and $S(t)$ a solution of \eqref{eq:HamS}. Then there exists a $T_{H,L}$ such that  for all $t \in [0, T_{H,L}) $ 
$S(t)L$ is again a positive Lagrangian subspace. If  $\Im H\leq 0$ we can take $T_{H,L}=\infty$.
\end{lem}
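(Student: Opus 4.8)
The plan is to transport the Hermitian form $h(z,z)=\tfrac{\ui}{2}z\cdot\Omega\bar z$ along the flow and reduce the statement to a single inequality for it. First I would record that $S(t)$ is symplectic for every $t$: differentiating $S^{T}\Omega S$ along \eqref{eq:HamS} and using $\Omega^{T}=-\Omega$, $\Omega^{2}=-I$ and $H^{T}=H$ gives $\tfrac{\ud}{\ud t}(S^{T}\Omega S)=0$, hence $S(t)\in Sp(n,\C)$. By the computation given above, $S(t)L$ is then automatically Lagrangian for all $t$, so the only thing left to control is positivity of $h$ on $S(t)L$.

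For $z=S(t)z_{0}\in S(t)L$ with $z_{0}\in L$ (unique, since $S(t)$ is invertible) one has $h(z,z)=\tfrac{\ui}{2}\,z_{0}\cdot M(t)\bar z_{0}$, where $M(t):=S(t)^{T}\Omega\,\overline{S(t)}$ and $M(0)=\Omega$. A direct computation --- differentiating $M$ along \eqref{eq:HamS} and using $H^{T}=H$, $\Omega^{T}=-\Omega$, $\Omega^{2}=-I$ and $H-\bar H=2\ui\,\Im H$ --- gives $\dot M(t)=2\ui\,S(t)^{T}(\Im H)\,\overline{S(t)}$, and integrating in time yields
\begin{equation}\label{eq:hform-flow}
h(S(t)z_{0},S(t)z_{0})=h(z_{0},z_{0})-\int_{0}^{t}(S(s)z_{0})\cdot(\Im H)\,\overline{(S(s)z_{0})}\,\ud s\,\, ,
\end{equation}
where the integrand is the (real) value on $S(s)z_{0}$ of the Hermitian form attached to the real symmetric matrix $\Im H$. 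Getting the signs and conjugations right in this single computation is the only genuinely error-prone step.

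Both assertions now follow from \eqref{eq:hform-flow}. If $\Im H\le 0$ the integrand is $\le 0$ for every $s$, so $h(S(t)z_{0},S(t)z_{0})\ge h(z_{0},z_{0})>0$ for all $z_{0}\neq 0$ by positivity of $L$; hence $S(t)L$ is positive Lagrangian for every $t\ge 0$ and one may take $T_{H,L}=\infty$. For general $H$ I would run a short-time compactness argument: positivity of $L$ gives $c:=\min\{h(z_{0},z_{0}):z_{0}\in L,\ \abs{z_{0}}=1\}>0$, and on any fixed interval $[0,T_{0}]$ local boundedness of $H$ in $t$ (already needed for \eqref{eq:HamS} to make sense) together with continuity of $S$ gives $M_{0}:=\sup_{s\in[0,T_{0}]}\norm{\Im H(s)}\,\norm{S(s)}^{2}<\infty$, so the integral in \eqref{eq:hform-flow} is bounded in modulus by $tM_{0}\abs{z_{0}}^{2}$. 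As both sides of \eqref{eq:hform-flow} are homogeneous of degree two in $z_{0}$, this gives $h(S(t)z_{0},S(t)z_{0})\ge(c-tM_{0})\abs{z_{0}}^{2}$, which is positive for $z_{0}\neq 0$ whenever $0\le t<T_{H,L}:=\min(T_{0},c/M_{0})$; with the Lagrangian property from the first paragraph this proves the lemma. The main point to be careful about is that this bound must be uniform over $z_{0}\in L$, which is exactly what compactness of the unit sphere of $L$ and continuity of $h$ supply; the rest is bookkeeping. (An alternative is to write $L=L_{B}$ via Lemma \ref{lem:LB} and follow $B(t)$ through the fractional-linear action of $S(t)$, but that additionally requires tracking when the relevant block of $S(t)$ stays invertible, which the direct argument with $h$ sidesteps.)
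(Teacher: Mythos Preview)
Your proof is correct and follows essentially the same route as the paper's: both arguments differentiate $\tfrac{\ui}{2}S(t)^{T}\Omega\,\overline{S(t)}$ along the flow, obtain $-S^{T}\Im H\,\bar S$, and conclude monotonicity of $h$ on $S(t)L$ when $\Im H\le 0$. The only differences are cosmetic: you spell out that $S(t)\in Sp(n,\C)$ and give an explicit quantitative short-time bound via compactness of the unit sphere of $L$, whereas the paper simply invokes continuity of $S(t)$ near $t=0$.
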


\begin{proof} Since $S(t)$ is close to the identity for small $t$ ,  $S(t)L$ will be positive by continuity for sufficiently small $t$. 
If $\Im H\leq 0$ we proceed as follows. With $Sz\in SL$ for $z\in L$ we have to consider $\ui (Sz)\cdot \Omega \overline{S z}/2=\ui z\cdot S^T\Omega \bar S \bar z/2$ for $z\in L$. From  \eqref{eq:HamS} we find 
\begin{equation}
\frac{\ud}{\ud t }\bigg(\frac{\ui}{2} S^T\Omega \bar S\bigg)=\frac{\ui}{2}S^T[-H\Omega\Omega +\Omega \Omega \bar H ]\bar S=- S^T\Im H \bar S.
\end{equation}
Thus, if $\Im H\leq 0$ then $\frac{\ud}{\ud t }\frac{\ui}{2}  z\cdot S^T\Omega \bar S \bar z\geq 0$ and  $S(t)L$ is therefore positive for all 
$t\geq 0$.  
\end{proof}

Let us now investigate how $B$ and the complex structure transform if we apply a symplectic map to $L$.
\begin{prop}\label{prop:actS} Let $L$ be a positive Lagrangian subspace and $S\in Sp(n,\C)$ such that  $SL$ is still positive. Then 
\begin{itemize}
\item[(i)] 
\begin{equation}\label{eq:BSL}
B_{SL}=S_*B_L
\end{equation}
 where the action of 
$S=\begin{pmatrix} S_{pp} & S_{pq}\\ S_{qp} & S_{qq}\end{pmatrix}$ on $B_L$ is defined by \\
\begin{equation}
S_*B_L:=(S_{pp}B_L+S_{pq})(S_{qp}B_L+S_{qq})^{-1}\,\, ,
\end{equation}
\item[(ii)] and 
\begin{align}
J_{SL}&=(\Re S-\Im S J_L)J_L(\Re S-\Im S J_L)^{-1}\label{eq:JSL}\\
G_{SL}&=\Omega(\Re S-\Im S J_L)\Omega^T G_{L}(\Re S-\Im S J_L)^{-1}\label{eq:GSL}\,\, . 
\end{align}
\end{itemize}
\end{prop}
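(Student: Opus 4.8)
The plan is to prove (i) directly from the graph representation $L=L_{B_L}=\{(B_Lq,q):q\in\C^n\}$, and to prove (ii) from the complementary graph representation over \emph{real} phase space, $L=\{(I+\ui J_L)v:v\in\R^n\times\R^n\}$, which follows from \eqref{eq:LgraphJ}. In both parts the hypothesis that $SL$ is again positive Lagrangian will be used precisely to guarantee invertibility of the matrices that have to be inverted.

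For (i), I would take $z=(B_Lq,q)\in L$ with $q\in\C^n$ and apply $S$ componentwise to get
\begin{equation}
Sz=\big((S_{pp}B_L+S_{pq})q,\ (S_{qp}B_L+S_{qq})q\big)\,\, ,
\end{equation}
so that $SL$ consists of all pairs $(p',q')$ with $q'=(S_{qp}B_L+S_{qq})q$ and $p'=(S_{pp}B_L+S_{pq})q$. To rewrite this as $L_{B_{SL}}$ I need $S_{qp}B_L+S_{qq}$ to be invertible, which follows exactly as in the proof of Lemma \ref{lem:LB}: the projection $\pi(p,q)=q$ restricted to the positive Lagrangian subspace $SL$ has trivial kernel, so $(S_{qp}B_L+S_{qq})q=0$ forces $Sz=0$ and hence $q=0$. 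Then $p'=(S_{pp}B_L+S_{pq})(S_{qp}B_L+S_{qq})^{-1}q'$, which is \eqref{eq:BSL}.

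For (ii), I would write a generic element of $L$ as $z=(I+\ui J_L)v$ with $v\in\R^n\times\R^n$. Using $S=\Re S+\ui\Im S$ together with $J_L^2=-I$ one finds the factorisation
\begin{equation}
S(I+\ui J_L)=(\Re S-\Im S J_L)(I+\ui J_L)=:M\,(I+\ui J_L)\,\, ,
\end{equation}
with $M:=\Re S-\Im S J_L$ a \emph{real} $2n\times2n$ matrix, so that every element of $SL$ has the form $Sz=Mv+\ui MJ_Lv$ with real part $Mv$. Since $SL$ is positive Lagrangian, the map $z'\mapsto\Re z'$ on $SL$ is injective, as in the lemma containing \eqref{eq:projker}; hence $Mv=0$ implies $Sz=0$, i.e.\ $v=0$, so $M$ is invertible. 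Substituting $v=M^{-1}w$ then gives $SL=\{(I+\ui\,MJ_LM^{-1})w:w\in\R^n\times\R^n\}=\ker P_{MJ_LM^{-1}}$, and since $(MJ_LM^{-1})^2=-I$, comparison with the inverse of $z'\mapsto\Re z'$ from that lemma identifies $MJ_LM^{-1}$ with the compatible complex structure $J_{SL}$ of $SL$, which is \eqref{eq:JSL}. Finally, from \eqref{eq:defJ} one has $G_L=\Omega J_L$ and $G_{SL}=\Omega J_{SL}$; inserting $J_{SL}=MJ_LM^{-1}$ and using $\Omega^T\Omega=I$ together with $\Omega J_L=G_L$ gives
\begin{equation}
G_{SL}=\Omega MJ_LM^{-1}=\Omega M\Omega^T(\Omega J_L)M^{-1}=\Omega(\Re S-\Im S J_L)\Omega^T G_L(\Re S-\Im S J_L)^{-1}\,\, ,
\end{equation}
which is \eqref{eq:GSL}.

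The only steps that are not routine bookkeeping are the two invertibility claims — for $S_{qp}B_L+S_{qq}$ in (i) and for $M=\Re S-\Im S J_L$ in (ii) — and these are exactly where the hypothesis that $SL$ stays positive is essential, via the injectivity of the coordinate projections $\pi$ and $\Re_{SL}$ on positive Lagrangian subspaces established in Section~2. Everything else reduces to the identities $J_L^2=-I$, $\Omega^2=-I$, and the factorisation $S(I+\ui J_L)=(\Re S-\Im S J_L)(I+\ui J_L)$.
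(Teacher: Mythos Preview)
Your proof is correct and follows essentially the same route as the paper's own argument: part (i) is handled identically via the graph representation $z=(B_Lq,q)$, and part (ii) via the real parametrisation $z=(I+\ui J_L)v$, with the $G_{SL}$ formula obtained from $G=\Omega J$. Your explicit justification of the invertibility of $S_{qp}B_L+S_{qq}$ and of $M=\Re S-\Im S J_L$ from the positivity of $SL$ is a welcome addition (the paper leaves this implicit), and your one-line factorisation $S(I+\ui J_L)=(\Re S-\Im S J_L)(I+\ui J_L)$ is a slightly cleaner way to reach $J_{SL}=MJ_LM^{-1}$ than the paper's comparison of two expressions for $\Im[Sz]$, but these are purely cosmetic differences.
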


\begin{proof}
Let $z\in L$, then there exists a $q\in \C^n$ such that $z=(B_L q,q)$, by Lemma \ref{lem:LB},  and since $Sz\in SL$ there exists a $q'\in \C^n$ such that 
$Sz=(B_{SL}q',q')$. Now $Sz=S(B_L q,q)=(S_{pp}B_L q+S_{pq}q, S_{qp}B_Lq+S_{qq}q)$ and hence we obtain the two equations
\begin{equation}
(S_{pp}B_L +S_{pq})q=B_{SL}q'\,\, ,\quad (S_{qp}B_L+S_{qq})q=q'\,\, .
\end{equation}
From the second equation we get $q=(S_{qp}B_L+S_{qq})^{-1}q'$ and inserting this into the first gives 
$(S_{pp}B_L +S_{pq})(S_{qp}B_L+S_{qq})^{-1}q'=B_{SL}q'$, which is the first result. 

To derive the second result we note that $z\in L$ means $z=\Re z+\ui J_L\Re z$, by \eqref{eq:LgraphJ}, and similarly $Sz\in SL$ means 
$Sz=\Re (Sz)+\ui J_{SL} \Re (Sz)$ and thus we arrive at the expressions 
\begin{align}
\Im [Sz] &=J_{SL}\Re [Sz]=J_{SL} (\Re S-\Im S J_L)\Re z\\
\Im [Sz] &=\Im [S(\Re z+\ui J_L\Re z)]=(\Im S+\Re S J_L)\Re z\,\, .
\end{align}
Comparing these two expressions for $\Im (Sz)$ gives $J_{SL}= (\Im S+\Re S J_L)(\Re S-\Im S J_L)^{-1}$ and 
with $J_L^2=-1$ we furthermore obtain $(\Im S+\Re S J_L)=(\Re S-\Im S J_L)J_L$. The result for $G_{SL}$ then follows from $G_{SL}=-\Omega J_{SL}$. 
\end{proof}

Note that there is a certain similarity in the structure of equations 
\eqref{eq:BSL}, \eqref{eq:JSL} and \eqref{eq:GSL}. In fact we can rewrite 
 \eqref{eq:JSL} and \eqref{eq:GSL} as 
 \begin{align}
 J_{SL}&=(\Re S J_L+\Im S)(-\Im S J_L +\Re S)^{-1}\\
 G_{SL}&=(\Omega \Re S\Omega^T G_L+ \Omega \Im S)( -\Im S \Omega^T G_L+ \Re S)^{-1}
 \end{align}
then 
\begin{equation}
J_{SL}=\tilde \Phi_*J_L\,\, \quad \text{and} \quad G_{SL}=\Phi_*G_L\,\, ,
\end{equation}
with 
\begin{equation}\label{eq:phis}
\tilde \Phi=\begin{pmatrix} \Re S & \Im S\\ -\Im S & \Re S\end{pmatrix} \,\, \quad \text{and}\quad  
\Phi= \begin{pmatrix} \Omega \Re S\Omega^T &  \Omega \Im S\\ -\Im S \Omega^T & \Re S\end{pmatrix}\,\, .
\end{equation}

If the symplectic matrix $S$ is a solution of the differential equation \eqref{eq:HamS} then this induces corresponding differential equations for the evolution of the  matrices
$B_{SL}$ and $J_{SL}$ which we shall now derive. 

\begin{thm} \label{thm:S-prop} Let $S(t)$ be a solution to \eqref{eq:HamS} with $H=\begin{pmatrix} H_{pp} & H_{pq}\\ H_{qp} & H_{qq}\end{pmatrix}$, 
and $L$ a positive Lagrangian subspace, then there exists a $T_{H,L}>0$ such that $S(t)L$ is positive  for $t\in [0, T_{H,L}]$ and 
we have 
\begin{itemize}
\item[(i)] 
\begin{equation}\label{eq:Bdot}
\dot B_{SL}= -H_{qp}B_{SL}-B_{SL}H_{pq}-H_{qq}-B_{SL} H_{pp}B_{SL}
\end{equation}
\item[(ii)]  and 
\begin{align}
\dot J_{SL}&=  \Omega \Re H J_{SL}-J_{SL}\Omega \Re H+\Omega \Im H+J_{SL} \Omega \Im HJ_{SL}\\
\dot G_{SL} &=  \Re H \Omega G_{SL}-G_{SL}\Omega \Re H- \Im H+G_{SL} \Omega^T \Im H\Omega G_{SL}\label{eq:Gdot}
\end{align}
\end{itemize}
Furthermore if $\Im H \leq 0$ we can take $T_{H,L}=\infty$. 
\end{thm}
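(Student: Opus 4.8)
The plan is to derive all three evolution equations from a single algebraic fact, exploiting that $B_{SL}$, $J_{SL}$ and $G_{SL}$ are, by Proposition \ref{prop:actS}, matrix fractional‑linear (Möbius) images $\mathcal A\mathcal C^{-1}$ of the fixed matrices $B_L$, $J_L$, $G_L$, where $\mathcal A$, $\mathcal C$ are built linearly out of $S(t)$. Since $S(t)$ obeys the linear flow \eqref{eq:HamS}, the pair $(\mathcal A,\mathcal C)$ obeys a linear flow as well, and a Möbius image of a linear flow is a Riccati flow. Concretely, I would first record: if $\mathcal A(t)$, $\mathcal C(t)$ are matrix valued with
\begin{equation}
\begin{pmatrix}\dot{\mathcal A}\\ \dot{\mathcal C}\end{pmatrix}=\begin{pmatrix} a & b\\ c & d\end{pmatrix}\begin{pmatrix}\mathcal A\\ \mathcal C\end{pmatrix}\, ,
\end{equation}
and $\mathcal C$ is invertible, then $W:=\mathcal A\mathcal C^{-1}$ satisfies $\dot W=b+aW-Wd-WcW$; this is immediate from $\dot W=\dot{\mathcal A}\mathcal C^{-1}-W\dot{\mathcal C}\mathcal C^{-1}$. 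The interval $[0,T_{H,L})$ on which $S(t)L$ stays positive Lagrangian, together with $T_{H,L}=\infty$ when $\Im H\le 0$, is exactly the content of the preceding lemma; on that interval positivity makes $S(t)L$ a graph over the $q$‑plane (relevant for $B$) and over $\R^n\times\R^n$ (relevant for $J$ and $G$), which is precisely what makes the denominators $\mathcal C$ invertible, so that the Möbius expressions of Proposition \ref{prop:actS} and the Riccati equations below are well defined there.

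For part (i), I would use the block form of $\Omega H$ and write $\dot S=\Omega HS$ blockwise, then set $U:=S_{pp}B_L+S_{pq}$, $V:=S_{qp}B_L+S_{qq}$, so that $B_{SL}=UV^{-1}$ by \eqref{eq:BSL}. A direct computation should give $\dot U=-H_{qp}U-H_{qq}V$ and $\dot V=H_{pp}U+H_{pq}V$ — that is, $(U,V)^{T}$ evolves under the coefficient matrix $\Omega H$ itself. Applying the Riccati identity with $a=-H_{qp}$, $b=-H_{qq}$, $c=H_{pp}$, $d=H_{pq}$ then produces exactly \eqref{eq:Bdot}.

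For part (ii), I would split $\dot S=\Omega HS$ into real and imaginary parts, $\dot{\Re S}=\Omega\Re H\,\Re S-\Omega\Im H\,\Im S$ and $\dot{\Im S}=\Omega\Im H\,\Re S+\Omega\Re H\,\Im S$, and use the rewritten form of \eqref{eq:JSL}, namely $J_{SL}=PQ^{-1}$ with $P:=\Re S\,J_L+\Im S$ and $Q:=\Re S-\Im S\,J_L$. Substituting the above and using $J_L^2=-I$ to reassemble the terms, I expect to obtain $\dot P=\Omega\Re H\,P+\Omega\Im H\,Q$ and $\dot Q=\Omega\Re H\,Q-\Omega\Im H\,P$; the Riccati identity with $a=d=\Omega\Re H$, $b=\Omega\Im H$, $c=-\Omega\Im H$ then yields the stated equation for $\dot J_{SL}$. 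The equation \eqref{eq:Gdot} for $\dot G_{SL}$ should follow without further effort from $G_{SL}=\Omega J_{SL}$ (cf.~\eqref{eq:defJ}) together with $\Omega^2=-I$ and $\Omega^{T}=-\Omega$, which convert the $\dot J_{SL}$ equation into \eqref{eq:Gdot}; alternatively one repeats the Riccati computation with the pair that appears in \eqref{eq:GSL}.

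The hard part will be nothing deep — the entire content is the observation that the Möbius action intertwines the linear flow with a Riccati flow — but there are two places that demand care. First, the bookkeeping: one must check that, after inserting the block ODEs for $S$, the $B_L$‑ and $J_L$‑dependent contributions genuinely recombine into the pairs $(U,V)$ and $(P,Q)$, rather than generating new matrix combinations; this is where symmetry of $H$ and the identity $J_L^2=-I$ are used. Second, one must be explicit that the denominators $V$ and $Q$ are invertible exactly on the positivity interval $[0,T_{H,L})$ furnished by the preceding lemma, so that the Riccati equations make sense there — and it is the $\Im H\le 0$ case of that same lemma which lets us take $T_{H,L}=\infty$.
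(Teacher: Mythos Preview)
Your proposal is correct and is essentially the paper's own argument: for part (i) the paper differentiates $B_{SL}=(S_{pp}B_L+S_{pq})(S_{qp}B_L+S_{qq})^{-1}$ using the block form of $\dot S=\Omega HS$, which is exactly your $U,V$ computation, and for part (ii) it differentiates the expression from Proposition~\ref{prop:actS} and then passes to $G_{SL}=\Omega J_{SL}$, just as you do. The only cosmetic difference is that for $J_{SL}$ the paper works with the conjugation form $J_{SL}=AJ_LA^{-1}$, $A=\Re S-\Im S J_L$, and computes $\dot A=[\Omega\Re H-\Omega\Im H\,J_{SL}]A$ directly, whereas you use the equivalent fractional--linear form $J_{SL}=PQ^{-1}$ and your general Riccati lemma; the content and the amount of work are the same.
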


\begin{proof}
Since $S(0)=I$ it is clear that for small $t$ the space  $S(t)L$ will still be positive, hence there exists a $T_{H,L}$ such that 
$SL$ is positive for $t\in [0,T_{H,L}]$.  Now from   \eqref{eq:HamS}  we get 
\begin{equation}\label{eq:dotS}
\begin{pmatrix} \dot S_{pp} & \dot S_{pq}\\ \dot S_{qp} & \dot S_{qq}\end{pmatrix} 
=\begin{pmatrix} -H_{qp}S_{pp}-H_{qq}S_{qp} & -H_{qp}S_{pq}-H_{qq}S_{qq}\\ H_{pp} S_{pp}+H_{pq}S_{qp} & H_{pp}S_{pq}+H_{pq}S_{qq}\end{pmatrix}
\end{equation}
then differentiating the relation \eqref{eq:BSL} and using \eqref{eq:dotS} gives 
\begin{equation}
\begin{split}
\dot B_{SL}
&=(\dot S_{pp}B_L+\dot S_{pq})(S_{qp}B_L+S_{qq})^{-1}-B_{SL}(\dot S_{qp}B_L+\dot S_{qq})(S_{qp}B_L+S_{qq})^{-1}\\
&=-H_{qp} (S_{pp}B_L+S_{pq})(S_{qp}B_L+S_{qq})^{-1} -H_{qq} (S_{qp}B_L+S_{qq})(S_{qp}B_L+S_{qq})^{-1} \\
&\quad -B_{SL}H_{pp}(S_{pp}B_L+S_{pq})(S_{qp}B_L+S_{qq})^{-1} -B_{SL}H_{pq}(S_{qp}B_L+S_{qq})(S_{qp}B_L+S_{qq})^{-1}\\
&=-H_{qp}B_{SL}-H_{qq}-B_{SL} H_{pp}B_{SL}-B_{SL}H_{pq}
\end{split}
\end{equation}
To prove the second set of relations we first rewrite \eqref{eq:JSL} as $J_{SL}=AJ_LA^{-1}$ with 
$A=\Re S-\Im SJ_L$, then 
\begin{equation}
\dot J_{SL}=\dot A J_{L}A^{-1}-AJ_L A^{-1}\dot A A^{-1}=\dot A A^{-1}J_{SL}-J_{SL}\dot A A^{-1}\,\, . 
\end{equation}
Then from \eqref{eq:HamS} we find $\Re \dot S=\Omega \Re H\Re S-\Omega \Im H\Im S$ and 
$\Im \dot S=\Omega \Im H\Re S+\Omega \Re H \Im  S$ and using these relations we find  
\begin{equation}
\dot A=\Re \dot S-\Im \dot S J_L=[\Omega \Re H -\Omega \Im H J_{SL}]A
\end{equation}
and this leads to 
\begin{equation}
\dot J_{SL}=\Omega \Re H J_{SL}+\Omega \Im H-J_{SL}\Omega \Re H+J_{SL} \Omega \Im HJ_{SL}\,\, .
\end{equation}
The result for $G_{SL}$ then follows using the relation $G_{SL}=\Omega J_{SL}$
\end{proof}

The formal similarity of the equations \eqref{eq:Gdot} and \eqref{eq:Bdot} suggests to define 
a Hamiltonian $K(\zeta,z)$ on the doubled phase space  by 
\begin{equation}
K(\zeta,z)=\frac{1}{2} (\zeta,z)\begin{pmatrix} -\Omega^T\Im H\Omega & \Omega\Re H \\ -\Re H \Omega & \Im H\end{pmatrix}
\begin{pmatrix}\zeta \\ z\end{pmatrix}
\end{equation}
then the matrix $\Phi(t)$ from \eqref{eq:phis} satisfies 
\begin{equation}\label{eq:phi-t}
\dot\Phi =\begin{pmatrix} 0 & -I\\ I & 0\end{pmatrix} \begin{pmatrix} -\Omega^T\Im H\Omega & \Omega\Re H \\ -\Re H \Omega & \Im H\end{pmatrix} \Phi\,\, .
\end{equation}
And so by solving \eqref{eq:phi-t} with $\Phi(t=0)=I$ we find a matrix such that 
\begin{equation}\label{eq:Gstar}
G(t)=\Phi(t)_*G
\end{equation}
is a solution to \eqref{eq:rG} with $G(t=0)=G$.

\subsection{Quantum evolution}

The results from Theorem \ref{thm:S-prop} allow us to solve the non-linear Riccati equations 
\eqref{eq:cB} and \eqref{eq:rG} in terms of solutions to  linear Hamiltonian equations, which we will exploit in what follwos. 

We first consider the Schr{\"o}dinger equation for a coherent state in position  representation, \eqref{eq:coh_state_time}. 
Let $S(t)\in Sp(n,\C)$ be the solutions to 
\begin{equation}
\dot S=\Omega HS\,\, ,\quad \text{with}\quad S(0)=I
\end{equation}
then $z(t)=S(t)z_0$ is a solution to \eqref{eq:cz} and by Theorem  \ref{thm:S-prop}, part (i),  and 
Proposition \ref{prop:actS}, part (i), $S_*B$ is a solution to \eqref{eq:cB}. Hence we conclude

\begin{thm} Let $L=L_B$ be a positive Lagrangian subspace, then there exists a $T_{H,L}>0$ such that for 
$t\in [0,T_{H,L})$ the solution to the Schr{\"o}dinger equation with $\psi(t=0)=\psi_z^B$ is given by 
\begin{equation}
\psi(t)=\ue^{\ui\alpha(t)}\psi_{S(t)z}^{S(t)_*B}
\end{equation}
where $\alpha\in \C$ is the solution to \eqref{eq:calpha} with $\alpha(0)=0$. 
\end{thm}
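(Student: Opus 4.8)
The plan is to verify directly that the ansatz $\psi(t)=\ue^{\ui\alpha(t)}\psi_{S(t)z}^{S(t)_*B}$ solves the Schr\"odinger equation by assembling the three evolution equations \eqref{eq:cz}, \eqref{eq:cB}, \eqref{eq:calpha} that were shown to be equivalent to inserting a Gaussian ansatz \eqref{eq:coh_state_time} into \eqref{eq:Schrodinger}. First I would invoke the earlier lemma (the one producing $T_{H,L}$) to fix an interval $[0,T_{H,L})$ on which $S(t)L_B$ remains a positive Lagrangian subspace, so that by Lemma \ref{lem:LB} the matrix $B(t):=S(t)_*B$ is well-defined, symmetric, and satisfies $\Im B(t)>0$; on this interval the state $\psi_{S(t)z}^{S(t)_*B}$ genuinely lies in $L^2(\R^n)$. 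For the case $\Im H\le 0$ the same lemma gives $T_{H,L}=\infty$.

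Next I would check each of the three ingredients in turn. Setting $z(t)=S(t)z$ with $\dot S=\Omega H S$ immediately gives $\dot z=\Omega H z$, which is \eqref{eq:cz}; equivalently $\dot p=-\cH_q'$, $\dot q=\cH_p'$ since $\cH(z)=\tfrac12 z\cdot Hz$ is quadratic. Then Proposition \ref{prop:actS}(i) identifies $B(t)=S(t)_*B$ as $B_{S(t)L_B}$, and Theorem \ref{thm:S-prop}(i), equation \eqref{eq:Bdot}, shows this satisfies exactly the Riccati equation \eqref{eq:cB}. Finally, with $z(t)$ and $B(t)$ in hand, equation \eqref{eq:calpha} is a first-order linear (indeed, simply integrable) ODE for $\alpha(t)$, so choosing $\alpha$ to be its solution with $\alpha(0)=0$ is legitimate; note here that the term $\tr[H_{qq}B^{-1}]$ in \eqref{eq:calpha} is finite precisely because $\Im B(t)>0$ keeps $B(t)$ invertible on $[0,T_{H,L})$. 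At $t=0$ we have $S(0)=I$, hence $z(0)=z$, $B(0)=B$, $\alpha(0)=0$, so the initial condition $\psi(0)=\psi_z^B$ holds.

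It then remains to note that the derivation of the system \eqref{eq:cz}--\eqref{eq:calpha} from \eqref{eq:Schrodinger} is reversible: if $(z(t),B(t),\alpha(t))$ solve \eqref{eq:cz}, \eqref{eq:cB}, \eqref{eq:calpha}, then substituting the corresponding $\psi(t)$ of the form \eqref{eq:coh_state_time} back into $\ui\hbar\pa_t\psi-\hat\cH\psi$ and collecting powers of $(x-q(t))$ yields zero, since each power-coefficient vanishes by one of the three equations. This is the content already established in the passage preceding \eqref{eq:cz1}--\eqref{eq:calpha1}; one simply reads it backwards. Uniqueness on $[0,T_{H,L})$, if desired, follows because within the Gaussian class the coefficient ODEs have unique solutions, and more generally from the fact that $\hat\cH$ generates a well-defined evolution for these initial data.

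The main obstacle is not any single computation — each of the three checks is routine given the earlier results — but rather the bookkeeping that ties together the three different parametrisations: making sure that $S(t)_*B$, defined abstractly via $B_{S(t)L_B}$ in Proposition \ref{prop:actS}, is the same object that appears in the Gaussian exponent, and that its Riccati evolution \eqref{eq:Bdot} matches \eqref{eq:cB} sign-for-sign. A secondary subtlety, worth a sentence in the proof, is the domain-of-validity issue: for general complex $H$ one only gets a finite time $T_{H,L}$, reflecting exactly the blow-up phenomenon illustrated by the example $\cH=\ui q^2/2$ earlier, and the statement is honest about this by restricting to $t\in[0,T_{H,L})$.
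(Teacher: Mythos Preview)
Your proposal is correct and follows exactly the approach the paper takes: the paper's proof is the short paragraph immediately preceding the theorem, which notes that $z(t)=S(t)z$ solves \eqref{eq:cz} and that $S(t)_*B$ solves \eqref{eq:cB} by combining Proposition \ref{prop:actS}(i) with Theorem \ref{thm:S-prop}(i), whence the Gaussian ansatz solves the Schr\"odinger equation. You have supplied the same argument with more explicit bookkeeping (well-definedness of $B(t)$ via positivity of $S(t)L_B$, invertibility of $B(t)$, reversibility of the coefficient-matching), but nothing in the strategy differs.
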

The phase factor is related to the action along $S(t)z$ and also contains Maslov-phase type contributions. 

The matrix $S(t)_*B$ defines a time dependent complex structure $J(t)$ via \eqref{eq:J} which projects the 
complex centre $S(t)z$ to the real centre 
\begin{equation}
Z(t)=P_{J(t)}(S(t)z)
\end{equation}
and we can use Theorem \ref{thm:proj} to express the Wigner function in terms of projections from the 
complex dynamics $S(t)$. 
 
Alternatively we can solve the purely real set of equations \eqref{eq:rG} and \eqref{eq:rZ} to directly obtain the motion of the real centre. Let $\Phi(t)$ be the solution to \eqref{eq:phi-t} and $Z(t)$  a solution to \eqref{eq:rZ} with 
$G(t)=\Phi(t)_*G$ then we have 
\begin{thm} Let $G$ be a symplectic positive definite symmetric matrix. Then there exists a $T_{H,G}>0$ such that 
for $t\in [0, T_{H,G})$ the unique solution to the Wigner von Neuman equation \eqref{eq:SchrWigner} 
with initial condition $W(z)=\frac{1}{(\pi\hbar)^n}\ue^{-\frac{1}{\hbar} (z-Z)\cdot G(z-Z)}$ is given by 
\begin{equation}
W(t,z)=\frac{\ue^{-\beta(t)}}{(\pi\hbar)^n}\ue^{-\frac{1}{\hbar} (z-Z(t))\cdot [\Phi(t)_*G](z-Z(t))}
\end{equation}
where $\beta(t)\in \R$ is a solution to \eqref{eq:beta} with $\beta(0)=0$. 
\end{thm}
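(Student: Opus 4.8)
The plan is to verify that the ansatz
\begin{equation}
W(t,z)=\frac{\ue^{-\beta(t)}}{(\pi\hbar)^n}\ue^{-\frac{1}{\hbar}(z-Z(t))\cdot[\Phi(t)_*G](z-Z(t))}
\end{equation}
solves \eqref{eq:SchrWigner}, and then to invoke uniqueness. First I would record that the Gaussian form \eqref{eq:AWigner} is preserved under the evolution \eqref{eq:SchrWigner}: substituting \eqref{eq:AWigner} into \eqref{eq:SchrWigner} and collecting powers of $(z-Z)$ yields exactly the coupled system \eqref{eq:rZ}, \eqref{eq:rG}, \eqref{eq:beta}, as already indicated in the text preceding the theorem. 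So it suffices to produce a solution of that system with the stated initial data. For the $Z$-equation \eqref{eq:rZ} this is just the statement that an ODE with smooth (here linear, possibly time-dependent) coefficients has a solution on some maximal interval $[0,T_{H,G})$; for $\beta$, equation \eqref{eq:beta} is a direct quadrature once $Z$ and $G$ are known, with $\beta(0)=0$, and it stays real because $\Im H$, $Z$, $G$ are real. The only substantive point is the $G$-equation.

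The core step is therefore to check that $G(t)=\Phi(t)_*G$ solves the matrix Riccati equation \eqref{eq:rG}. This is essentially already done in the excerpt: by \eqref{eq:Gstar} and the construction of $\Phi$ via \eqref{eq:phi-t}, Theorem \ref{thm:S-prop}(ii) gives
\begin{equation}
\dot G_{SL}=\Re H\,\Omega G_{SL}-G_{SL}\Omega\Re H-\Im H+G_{SL}\Omega^T\Im H\,\Omega G_{SL},
\end{equation}
which is precisely \eqref{eq:rG}. Here one identifies $G=G_L$ with the positive Lagrangian subspace $L$ it determines (using $G$ symplectic positive definite $\Leftrightarrow$ compatible complex structure $J=-\Omega G$ $\Leftrightarrow$ positive Lagrangian $L=\ker P_J$, established in the two lemmas of Section 2), so that the pushforward notation $\Phi(t)_*G$ makes sense and agrees with $G_{S(t)L}$ for the symplectic solution $S(t)$ of \eqref{eq:HamS}. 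The same theorem supplies the existence time: there is $T_{H,G}=T_{H,L}>0$ on which $S(t)L$ remains positive, equivalently on which $\Phi(t)_*G$ remains symmetric positive definite, and one takes $T_{H,G}=\infty$ when $\Im H\le0$. Thus the three equations \eqref{eq:rZ}, \eqref{eq:rG}, \eqref{eq:beta} are all solved with the correct initial conditions, and hence $W(t,z)$ as written solves \eqref{eq:SchrWigner} with $W(0,z)$ the given Gaussian.

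For uniqueness I would argue as follows. Equation \eqref{eq:SchrWigner} is a linear evolution equation of the form $\hbar\partial_t W=\mathcal{L}(t)W$ with $\mathcal{L}(t)$ a second-order differential operator with polynomial coefficients; on Gaussian initial data the solution is forced to remain Gaussian (the nonlinear equations for the parameters being the only freedom), and those parameter ODEs have unique solutions by the standard Picard--Lindelöf theorem, since the right-hand sides of \eqref{eq:rZ}, \eqref{eq:rG}, \eqref{eq:beta} are locally Lipschitz in $(Z,G,\beta)$ — indeed smooth, with $G$ restricted to the open set of positive definite symmetric matrices where $G^{-1}$ is defined. Hence the Gaussian solution is unique, and since any solution starting from a Gaussian is Gaussian, the displayed $W(t,z)$ is the unique solution on $[0,T_{H,G})$.

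The main obstacle, such as it is, is bookkeeping rather than anything deep: one must be careful that the identification of the symmetric positive definite matrix $G$ with a positive Lagrangian subspace $L$, and of the evolution $G\mapsto\Phi(t)_*G$ with $G_L\mapsto G_{S(t)L}$, is consistent with the conventions of Proposition \ref{prop:actS} and Theorem \ref{thm:S-prop}, so that equation \eqref{eq:Gdot} can be quoted verbatim as \eqref{eq:rG}. One should also note explicitly that the reality and positivity of $G(t)$ on $[0,T_{H,G})$ — needed for $G(t)^{-1}$ in \eqref{eq:rZ}, \eqref{eq:beta} to make sense and for $W(t,z)$ to be a genuine (decaying) Gaussian — is exactly the content of the lemma guaranteeing $S(t)L$ stays positive, and that this is where the restriction to $t<T_{H,G}$ (removable when $\Im H\le0$) enters.
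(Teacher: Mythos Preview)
Your proposal is correct and follows essentially the same route as the paper: the paper does not give a separate proof for this theorem, but the argument is implicit in the surrounding text---the Gaussian ansatz \eqref{eq:AWigner} reduces \eqref{eq:SchrWigner} to the system \eqref{eq:rZ}--\eqref{eq:beta}, and \eqref{eq:Gstar} together with Theorem~\ref{thm:S-prop}(ii) shows $\Phi(t)_*G$ solves \eqref{eq:rG} on $[0,T_{H,G})$. Your only addition is the uniqueness discussion, which the paper simply asserts; note however that your phrasing ``any solution starting from a Gaussian is Gaussian'' presupposes uniqueness rather than proving it---the cleaner argument is to invoke uniqueness for the linear PDE \eqref{eq:SchrWigner} directly and then observe that the Gaussian you constructed must be \emph{the} solution.
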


One of the characteristic features of the dynamical equation \eqref{eq:rZ} for the real centre $Z(t)$ is that it is in general not autonomous, the coefficients of this equation will depend on $t$ via the metric $G(t)$. However, in many cases there are special solutions for which the metric is time independent, corresponding to fixed points of the evolution equation \eqref{eq:rG}. To analyse the possible time independent complex structures we have to set the expression for the time derivative of the metric $G(t)$ in \eqref{eq:rG}, or equivalently the time derivative of the matrix $B(t)$ in \eqref{eq:cB}, to zero. Thus we obtain quadratic matrix equations for the fixed points $G_0$ and $B_0$, respectively. 
Let us illustrate this observation with a few examples. 
\begin{itemize}
\item[(1)] Assume the Hamiltonian is anti-Hermitian, i.e., $\Re H=0$, then \eqref{eq:rG} with $\dot G=0$ becomes 
$\Im H=G\Omega^T\Im H \Omega G$, and if we assume furthermore that for some $\gamma>0$ we have $\Im H=-\gamma S$,  where 
$S$ is symplectic, symmetric and positive, then we find that $G=S$ (here we used that $\Omega^TS\Omega=S^{-1}$). 
The assumptions on $\Im H$ hold for instance if $n=1$ and $\Im H$ is negative definite (with $\gamma=\det \Im H$). Thus in this case the metric and the 
associated complex structure are constant and the equation of motion for the centre simplifies to
\begin{equation}
\dot Z=-S^{-1} \gamma S Z=-\gamma Z\,\, .
\end{equation}
Hence we find a uniform contraction towards the origin. For this Hamiltonian we can as well solve   \eqref{eq:phi-t} 
explicitly and using \eqref{eq:Gstar} we  find that for an arbitrary initial $G_0$ the solution to  \eqref{eq:rG} is given by
\begin{equation}
G(t)=(G_0+\tanh(\gamma t) S)(\tanh(\gamma t)G_0+S)^{-1} S=S+O(\ue^{-\gamma t})\,\, ,
\end{equation}
hence the stationary solution $G=S$ we found above  is a global attractor, to which any other solution converges exponentially fast to. 
Note as well that $G(t)$ can be  extended to some negative $t$ but will eventually become singular. 
  
\item[(2)] We previously discussed the example $\cH(z)=\ui q^2/2$ as a case where the time evolution will be only defined
for finite time. For the discussion of this case it is most convenient to use \eqref{eq:cB}, which gives $\dot B=-\ui $ and hence 
$B(t)=B_0-\ui tI$. Since $\Im B(t)=\Im B_0-t I$ we see that the condition $\Im B(t)>0$ holds only for a finite time, after which 
the corresponding metric will blow up. The equation for the centre can also easily be solved; assume for simplicity that 
$\Re B_0=0$, then $\dot P=0$ and the position reaches infinity in finite time  
\begin{equation}
Q(t)=\frac{B_0}{B_0-t}Q_0\,\, .
\end{equation}
\item[(3)] We now have a look at a harmonic oscillator with damping induced by a momentum dependent imaginary part. 
We choose 
\begin{equation}
\cH(p,q)=\frac{\bar \delta^2}{2} p^2+\frac{\omega^2}{2}q^2\,\, , 
\end{equation}
where the parameter $\delta\in\C$ is assumed to satisfy $\abs{\delta}=1$ and $\Re \delta, \Im \delta >0$, hence $\Im \cH(p,q)=-\Re \delta \Im \delta\,  p^2\leq 0$. Therefore $\delta$ parametrizes the strength of the damping relative to the 
kinetic energy. Note that choosing $\abs{\delta}\neq 1$ just amounts 
 to rescaling of $\omega\to\abs{\delta}\omega$ and $t\to \abs{\delta}t$. 
Using \eqref{eq:cB} we find that $B=\ui \omega \delta$ is a constant solution with $\Im B>0$. We can then determine the corresponding metric $G$ and the equations of motions for the centre which read
\begin{equation}
\dot p=-\omega^2 q-2\omega \Im \delta \, p\,\, ,\quad \dot q=p\,\, .
\end{equation}
For comparison  with the classical damped harmonic oscillator we transform this set of first order equations into a second order equation 
for $q$, 
\begin{equation}
\ddot q+2\omega \Im \delta\,\,  \dot q+\omega^2 q=0\,\, .
\end{equation}
We see that due to the metric this describes an underdamped oscillator, since $\Im \delta\leq \abs{\delta}=1$, 
 irrespective of the choice for $\delta$. 
\item[(4)] It is instructive to include an example with a linear term in $z=(p,q)\in\R^2$, 
\begin{equation}
\cH_{\gamma}(z)=\frac{1}{2} z\cdot z+\ui \gamma\cdot \Omega z
\end{equation}
where $\gamma\in \R^2$. The inclusion of $\Omega$ in  the linear term is convenient, it  implies that 
if $z$ is to the right of $\gamma$ the term is negative and we have damping, and if 
$z$ is to the left of $\gamma$ the term is positive and  we have enhancement. This is a 
PT symmetric system. It can be brought to the more familiar form ${\mathcal H(p,q)}=\frac{1}{2}p^2+V(q)$ with $V(-q)=\bar V(q)$ via a canonical rotation of the phase space variables. Since $\Re H=I$ and $\Im H=0$ we find that 
$G=I$ is a solution for all times and with this initial choice the equation of motion for the centre $Z$, see \cite{GraeSchu11}, becomes
$\dot Z=\Omega(Z-\gamma)$. Hence $Z(t)=\gamma+O(t)(Z_0-\gamma)$ with 
$O(t)\in SO(2)$ denoting a rotation by $t$.  We can as well solve the equation for $\beta$ and find
$\beta(t)=-\gamma\cdot(Z(t)-Z_0)$. Thus, the centre of the Wigner function evolves along circles as for the harmonic oscillator, but the circles are shifted due to damping and enhancement in different parts of phase space. The relation to a real harmonic oscillator ${\mathcal H_0}$ can be directly seen in the following way. Introducting the complex translation $T(\gamma):=\ue^{\frac{-1}{\hbar}\gamma\cdot \hat z}$ we have $T(\gamma)^{-1}\hat\cH_0T(\gamma)=\cH_{\gamma}-\abs{\gamma}^2/2$, and thus the operator is conjugated to the harmonic oscillator by a non-unitary operator, and thus the spectrum is purely real. The norm of the state stays bounded over time although it oscillates, which reflects the fact that the eigenvalues of the Hamiltonian are real, but the eigenfunctions are not orthogonal for $\gamma\neq 0$.  

\end{itemize}

\section{Summary}
Coherent states are a useful tool for the investigation of semiclassical limits of quantum theories. The investigations presented here can be viewed as part of a programme to understand the classical dynamics emerging from the semiclassical limit 
of general non-Hermitian operators. We recently formulated an Ehrenfest Theorem for 
non-Hermitian operators \cite{GraeSchu11}, in which the classical dynamics is given by a combination of a symplectic and a metric gradient field, which are generated by the real and imaginary part of the Hamilton function, 
respectively.   This is a very different type of dynamics compared to what one would expect from extending standard WKB theory to complex Hamiltonians, which results in a Hamiltonian flow on complexified phase space.  
The main result here, is the proof that these two approaches are physically equivalent and are related by a projection from complexified phase space to real phase space,  
\begin{equation}
\ui\mapsto J\,\, ,
\end{equation}
where $J$ is a complex structure on phase space which is determined by the physical states and becomes a dynamical 
variable in our theory. 

We restricted ourselves to quadratic Hamiltonians and Gausssian coherent states here, because both 
semiclassical approaches become exact in this case, and we could focus on the complex symplectic geometry relating them. 
This will form the basis for extension to more general systems following \cite{GraeSchu11}. It is well known that semiclassical 
methods based on dynamics in complexified phase space  often run into difficulties related to analytic extensions, 
e.g., complex trajectories often develop singularities, and Hamiltonians which are very close on real phase space 
can have very different analytic extensions. The results described in the present paper and in \cite{GraeSchu11} provide an alternative approach which is non-Hamiltonian but manifestly real and thus avoids these problems. 

\ack EMG acknowledges support from the Imperial College JRF scheme.

\vskip 20pt

\end{document}